\DeclarePairedDelimiter{\ceil}{\lceil}{\rceil}
\DeclarePairedDelimiter{\floor}{\lfloor}{\rfloor}
\newcommand{\TTT}{\mathcal{T}}
\newcommand{\s}[1]{\ensuremath{\mathtt{#1}}}
\newcommand{\zero}{\s{0}}
\newcommand{\one}{\s{1}}
\newcommand{\emptystring}{\varepsilon}
\newcommand{\concat}{\cdot}
\newcommand{\concats}{\cdots}
\newcommand{\catname}[1]{{\mathsf{#1}}}
\newcommand{\trees}{\catname{Trees}}
\newcommand{\caterpillars}{\catname{Caterpillars}}
\newcommand{\bintrees}{\catname{BinaryTrees}}
\newcommand{\nca}{\operatorname{nca}}
\newcommand{\lev}{\operatorname{lev}}
\newcommand{\ldepth}{\operatorname{ldepth}}
\newcommand{\parent}{\operatorname{parent}}
\newcommand{\depth}{\operatorname{depth}}
\newcommand{\size}{\operatorname{size}}
\newcommand{\lsize}{\operatorname{lsize}}
\newcommand{\apex}{\operatorname{apex}}
\newcommand{\llabel}{\operatorname{llabel}}
\newcommand{\hlabel}{\operatorname{hlabel}}
\newcommand{\hpath}{\operatorname{hpath}}
\newcommand{\hchild}{\operatorname{hchild}}
\newcommand{\lchildren}{\operatorname{lchildren}}
\newtheorem{theorem}{Theorem}[section]
\newtheorem{lemma}[theorem]{Lemma}
\newtheorem{corollary}[theorem]{Corollary}
\theoremstyle{definition}
\newtheorem*{theorem*}{Theorem}
\begin{document}

\title{\Large Near-optimal labeling schemes for nearest common ancestors
}
\author{Stephen Alstrup\thanks{Department of Computer Science, University of Copenhagen, Denmark, s.alstrup@diku.dk.} \\
\and 
Esben Bistrup Halvorsen\thanks{Department of Computer Science, University of Copenhagen, Denmark, esbenbh@diku.dk.}
\and 
Kasper Green Larsen\thanks{MADALGO - Center for Massive Data Algorithmics, a Center of the Danish National Research Foundation, Department of Computer Science, Aarhus University, Denmark, larsen@cs.au.dk.}
}
\date{}

\maketitle


\begin{abstract}

%
%

\small
We consider NCA labeling schemes: given a rooted tree $T$, label the nodes of $T$ with binary strings such that, given the labels of any two nodes, one can determine, by looking only at the labels, the label of their nearest common ancestor.

For trees with $n$ nodes we present upper and lower bounds establishing that labels of size $(2\pm \epsilon)\log n$, $\epsilon<1$ are both sufficient and necessary.\footnote{All logarithms in this paper are in base 2.}

Alstrup, Bille, and Rauhe (SIDMA'05) showed that ancestor and NCA labeling schemes have labels of size $\log n +\Omega(\log \log n)$. Our lower bound increases this to $\log n + \Omega(\log n)$ for NCA labeling schemes. Since Fraigniaud and Korman (STOC'10) established that labels in ancestor labeling schemes have size $\log n +\Theta(\log \log n)$, our new lower bound  separates ancestor and NCA labeling schemes. Our upper bound improves the $10 \log n$ upper bound by Alstrup, Gavoille, Kaplan and Rauhe (TOCS'04), and our theoretical result even outperforms some recent experimental studies by Fischer (ESA'09) where variants of the same NCA labeling scheme are shown to all have labels of size approximately $8 \log n$.

\end{abstract}


\section{Introduction}
\newcommand{\bitset}{\mbox{$\{0,1\}$}}
\newcommand{\Oh}{\mathrm{O}}
\newcommand{\ie}{i.e., }
\newcommand{\etal}{et al.}
\newcommand{\set}[1]{\left\{{#1}\right\}}

A \emph{labeling scheme} assigns a \emph{label}, which is a binary string, to each node of a tree such that, given only the labels of two nodes, one can compute some predefined function of the two nodes. The main objective is to minimize the \emph{maximum label length}: that is, the maximum number of bits used in a label. 

With labeling schemes it is possible, for instance, to avoid costly access to large, global tables, to compute locally in distributed settings, and to have storage used for names/labels be informative. These properties are used in XML search engines~\cite{AKM01}, network routing and distributed algorithms~\cite{TZ01,Gavoille01,Cowen01,EGP98a,Gavoille01,FG02}, graph representations~\cite{KNR92} and other areas. An extensive survey of labeling schemes can be found in~\cite{gavoillepeleg}.

A nearest common ancestor (NCA) labeling scheme labels the nodes such that, for any two nodes, their labels alone are sufficient to determine the label of their NCA. Labeling schemes can be found, for instance, for distance, ancestor, NCA, connectivity, parent and sibling~\cite{GPPR01,KKP00,Peleg99,AGKR04,KNR92,TZ01,AR02,KM01,Breuer66,BF67,siamcompKatzKKP04,SK85}, and have also been analyzed for dynamic trees~\cite{CohenKaplan2010}.  NCA labeling schemes are used, among other things, to compute minimum spanning trees in a distributed setting~\cite{Pagli2004,Flocchini2012,Blin2010}.

Our main result establishes that labels of size $(2\pm\epsilon)\log n$, $\epsilon<1$ are both necessary and sufficient for NCA labeling schemes for trees with $n$ nodes.
More precisely, we show that label sizes are lower bounded by $1.008\log n -O(1)$ and upper bounded by $2.772\log n+O(1)$.

Since our lower bound is $\log n+\Omega(\log n)$, this establishes an exponential separation (on the nontrivial, additive term) between NCA labeling and the closely related problem of ancestor labeling which can be solved optimally with labels of size $\log n +\Theta(\log \log n)$~\cite{fraigniaudkorman,alstrupbillerauhe}.
(An ancestor labeling scheme labels the nodes in a tree such that, for any two nodes, their labels alone are sufficient to determine whether the first node is an ancestor of the second.)
The upper bound of $\log n +\Oh(\log \log n)$ for ancestor~\cite{fraigniaudkorman} is the latest result in a sequence \cite{AKM01,KM01,KMS02,AR02,abiteboul,fraigniaudkorman2} of improvements from the trivial $2\log n$ bound~\cite{Tsakalidis84}.

Our upper bound improves the $10\log n$ label size of~\cite{AGKR04}. In addition to the NCA labeling scheme used to establish our upper bound, we present another scheme with labels of size $3 \log n$ which on the RAM uses only linear time for preprocessing and constant time for answering queries, meaning that it may be an efficient solution for large trees compared to traditional non-labeling scheme algorithms~\cite{HT84}.

NCAs, also known as \emph{least common ancestors} or \emph{lowest common ancestors} (LCAs), have been studied extensively over the last several decades in many variations; see, for example,~\cite{Maier77,ASSU81,AHU76,SV88,CH99,AT00,Powell90,BF00,GBT84,SV88,BV93}. A linear time algorithm to preprocess a tree such that subsequent NCA queries can be answered in constant time is described in~\cite{HT84}. NCAs have numerous applications for graphs~\cite{Gabow90,KKT95,DRT92,AHU76}, strings~\cite{Gusfield97,Farach97}, planarity testing~\cite{Westbrook92}, geometric problems~\cite{CN99,GBT84}, evolutionary trees~\cite{FKW95}, bounded tree-width algorithms~\cite{CZ98} and more. A survey on NCAs with variations and application can be found in~\cite{AGKR04}.

A $\log n + \Oh(\log^*n)$ adjacency labeling scheme is presented in~\cite{alstruprauhe}, and adjacency labeling schemes of $\log n + \Oh(1)$ are presented in~\cite{bonichon} for the special cases of binary trees and caterpillars. We present NCA labeling schemes with labels of size $2.585\log n +O(1)$ and $\log n + \log \log n+O(1)$ for binary trees and caterpillars, respectively. Our lower bound holds for any family of trees that includes all trees of height $O(\log n)$ in which all nodes either have 2 or 3 children. 

\subsection{Variations and related work.}
The NCA labeling scheme in~\cite{AGKR04} is presented as an $\Oh(\log n)$ result, but it is easy to see that the construction gives labels of worst-case size $10 \log n$. The algorithm uses a decomposition of the tree, where each  component is assigned a sub-label, and a label for a node is a combination of sub-labels. Fischer~\cite{fischer} ran a series of experiments using various techniques for sub-labels~\cite{AGKR04,mehlhorn,hutucker} and achieved experimentally that worst-case label sizes are approximately $8 \log n$.

Peleg~\cite{Peleg00a} has established labels of size $\Theta(\log^2 n)$ for NCA labeling schemes in which NCA queries have to return a \emph{predefined label} of $\Oh(\log n)$ bits. Experimental studies of this variation can be found in~\cite{CaminitiFP08}.
In~\cite{Blin2010} the results from~\cite{AGKR04} are extended to predefined labels of length $k$. We have included a corollary that shows that such an extension can be achieved by adding $k \log n$ bits to the labels.

In~\cite{KormanK07} a model (1-query) is studied where one, in addition to the label of the input nodes, can access the label of one additional node. With this extra information, using the result from~\cite{AGKR04} for NCA labeling, they present a series of results for NCA and distance.
As our approach improves the label length from~\cite{AGKR04}, we also improve some of the label lengths in~\cite{KormanK07}.

Sometimes various computability requirements are imposed on the labeling scheme: in~\cite{siamcompKatzKKP04} a query should be computable in polynomial time; in~\cite{AKM01} in constant time on the RAM; and in~\cite{KNR92} in polynomial time on a Turing machine.
We use the same approach as in~\cite{AGKR04}, but with a different kind of sub-labels and with different encodings for lists of strings, and it is only the $2.772 \log n+O(1)$ labeling scheme for trees that we do not show how to implement efficiently.


\section{Preliminaries}
The \emph{size} or \emph{length} of a binary string $s=s_1\concats s_k$ is the number of bits $|s|=k$ in it. The concatenation of two strings $s$ and $t$ is denoted $s\concat t$.

Let $T$ be a rooted tree with root $r$. The \emph{depth} of a node $v$, denoted $\depth(v)$, is the length of the unique path from  $r$ to $v$. If a node $u$ lies on the path from $r$ to a node $v$, then $u$ is an \emph{ancestor} of $v$ and $v$ is a \emph{descendant} of $u$. If, in addition, $\depth(v)=\depth(u)+1$ so that $uv$ is an edge in the tree, then $u$ is the unique \emph{parent} of $v$, denoted  $\parent(v)$, and $v$ is a \emph{child} of $u$. 
A \emph{binary tree} is a rooted tree in which any node has at most two children.
A \emph{common ancestor} of two nodes $v$ and $w$ is a node that is an ancestor of both $v$ and $w$, and their \emph{nearest common ancestor} (NCA), denoted $\nca(v,w)$, is the unique common ancestor with maximum depth. The descendants of $v$ form an induced subtree $T_v$ with $v$ as root. The \emph{size} of $v$, denoted $\size(v)$, is the number of nodes in $T_v$.

Let $\TTT$ be a family of rooted trees. 
An \emph{NCA labeling scheme for $\TTT$} consists of an \emph{encoder} and a \emph{decoder}. The encoder is an algorithm that accepts  any tree $T$ from $\TTT$ as input and produces a \emph{label} $l(v)$, which is a binary string, for every node $v$ in $T$. The decoder is an algorithm that takes two labels $l(v)$ and $l(w)$ as input and produces the label $l(\nca(v,w))$ as output. Note that encoder knows the entire tree when producing labels for nodes, whereas the decoder knows nothing about $v$, $w$ or the tree from which they come, although it does know that they come from the \emph{same} tree and that this tree belongs to $\TTT$.
The worst-case \emph{label size} is the maximum size of a label produced by the encoder from any node in any tree in $\TTT$.

\section{Lower bound}
This section introduces a class of integer sequences, 3-2 sequences, and an associated class of trees, 3-2 trees\footnote{The related ``2-3 trees''~\cite{aho2} have a slightly different definition, which is why we use a different terminology here.}, so that two 3-2 trees that have many labels in common when labeled with an NCA labeling scheme correspond to two 3-2 sequences that are ``close'' in the sense of a metric known as Levenshtein distance. By considering  a subset of 3-2 sequences that are pairwise distant in this metric, the corresponding set of 3-2 trees cannot have very many labels in common, which leads to a  lower bound on the total number of labels and hence on the worst-case label size.

\subsection{Levenshtein distance and 3-2 sequences.}
The \emph{Levenshtein distance}~\cite{levenshtein}, or \emph{edit distance}, between two sequences $x$ and $y$ is defined as the number $\lev(x,y)$ of single-character edits (insertion, deletion and substitution) required to transform $x$ into $y$. 
%
A \emph{3-2 sequence} of length $2k$ is an integer sequence  $x=(x_1,\dots ,x_{2k})$ with exactly $k$ 2s and $k$ 3s.
\begin{lemma} \label{lemm:32seqs}
For any $h,k$ with $2\leq h\leq k$ and $k$ an integer with $k\geq 90$, there exists a set $\Sigma$ of 3-2 sequences of length $2k$ with $|\Sigma|\geq 2^{1.95k}/(16k/h)^{3h}$ and $\lev(x,y)>h$ for all $x,y\in\Sigma$.
\end{lemma}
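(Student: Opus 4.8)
The plan is to use the probabilistic deletion method: pick a large "code" of 3-2 sequences that are already pairwise Levenshtein-distant, by first taking the set of \emph{all} 3-2 sequences of length $2k$, then showing that only a controlled fraction of pairs can be close, and finally extracting a large subset avoiding all such close pairs. The total number of 3-2 sequences of length $2k$ is $\binom{2k}{k}$, which by Stirling is at least $2^{2k}/(2\sqrt{k}) \geq 2^{1.95k}$ for $k$ large (here $k\geq 90$ is exactly the threshold making the polynomial factor negligible against the $2^{0.05k}$ slack). So we start from a universe of size $N \geq 2^{1.95k}$ — or rather we keep the precise value $N=\binom{2k}{k}$ and only estimate it at the end.

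The key combinatorial step is to bound, for a fixed 3-2 sequence $x$, the number of 3-2 sequences $y$ with $\lev(x,y)\leq h$. Any such $y$ is obtained from $x$ by at most $h$ edits; but since $y$ must again have length $2k$ with exactly $k$ twos and $k$ threes, the edit script can be normalized. Each insertion/deletion/substitution is specified by choosing a position in a string of length at most $2k+h \leq 3k$ and a symbol from $\{2,3\}$ (or a "no-op"), so the number of edit scripts of length $\le h$ is at most $(2\cdot 3k + 1)^h \cdot (\text{something}) \le (8k)^h$ after absorbing constants, and hence the number of reachable $y$ is at most $(16k/h)^{\,?}$. To land exactly on the stated bound $(16k/h)^{3h}$ I would be a bit more generous: describe a length-$\le h$ edit sequence by choosing, for each of at most $h$ steps, one of $\le 3$ operation types, a position among $\le 3k$, and a symbol among $\le 2$, and also where to "pad" — giving at most $\bigl(C k\bigr)^{h}$ scripts; then note $h\le k$ so we may freely write the bound as $(16k/h)^{3h}$, which is no smaller. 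Call this bound $B := (16k/h)^{3h}$; the point is that $B$ counts a ball of radius $h$ in the Levenshtein metric on 3-2 sequences of length $2k$.

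Now form a graph $G$ on all $N$ 3-2 sequences, joining $x\sim y$ when $0<\lev(x,y)\le h$. Each vertex has degree at most $B-1 < B$, so $G$ has an independent set of size at least $N/B$ (greedily remove a vertex and its neighbourhood, discarding at most $B$ vertices per chosen vertex). Take $\Sigma$ to be such an independent set: then $|\Sigma|\geq N/B \geq 2^{1.95k}/(16k/h)^{3h}$ and, by construction, any two distinct elements of $\Sigma$ satisfy $\lev(x,y)>h$.

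The main obstacle is the counting of the radius-$h$ ball $B$: one has to argue carefully that bounded-length edit scripts between two sequences of the \emph{same} length $2k$ can be encoded by $O(h)$ choices from an alphabet of size $O(k)$, rather than naively getting something like $(2k)^{2h}$ from independent insert/delete positions — and to check that the resulting constant fits under $16$ and that the exponent can be taken as $3h$. The condition $h\geq 2$ ensures $16k/h$ is a sensible base, and $k\geq 90$ is used only in the final Stirling estimate $\binom{2k}{k}\geq 2^{1.95k}$; I would verify that bound numerically at $k=90$ and then note monotonicity of $\binom{2k}{k}/2^{1.95k}$ in $k$.
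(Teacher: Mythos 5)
Your overall strategy (bound the size of a Levenshtein ball of radius $h$ around a fixed 3-2 sequence, then greedily extract a packing from the $\binom{2k}{k}$ sequences, using $\binom{2k}{k}\geq 2^{2k-1}/\sqrt{k}\geq 2^{1.95k}$ for $k\geq 90$) is exactly the paper's. The gap is in the ball-size bound, which you yourself flag as the main obstacle and then resolve incorrectly. Counting \emph{ordered} edit scripts, with an independent position choice per step, gives a bound of the form $(Ck)^h$, and your final step claims that since $h\leq k$ this is at most $(16k/h)^{3h}$. That inequality requires $Ck\leq (16k/h)^3$, i.e.\ $Ch^3\leq 4096\,k^2$, which fails whenever $h$ is a constant fraction of $k$ and $k$ is large (e.g.\ already for $h=\Theta(k)$ and $k$ in the millions, $(Ck)^h$ exceeds $2^{2k}$, making the packing bound vacuous). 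This is precisely the regime the lemma is used in: Theorem 3.3 invokes it with $h=\frac{1}{60}k\log_{3/2}6\approx 0.074\,k$, so an $(Ck)^h$-type bound does not suffice to get $|\Sigma|\geq 2^{1.95k}/(16k/h)^{3h}$, and the final $n^{0.09}$ estimate would collapse.

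The missing idea is that the edit positions must be counted as \emph{unordered} sets, which buys a factor of roughly $h!$ per operation type. The paper normalizes an optimal transformation into $r$ deletions, then $s$ substitutions, then $t$ insertions with $r+s+t\leq h$, and bounds the number of reachable $y$ by
\[
\sum_{r=0}^h \binom{2k}{r}\sum_{s=0}^{h-r}\binom{2k-r}{s}\sum_{t=0}^{h-r-s}\binom{2k-r-s+t}{t}2^t
\leq (h+1)^3\binom{2k}{h}^2\binom{3k}{h}2^h ,
\]
and then uses $\binom{m}{h}\leq (me/h)^h$ together with $(h+1)^3\leq 8^h$ (valid since $h\geq 2$) to get $3^h(4ke/h)^{3h}\leq (16k/h)^{3h}$; note the base is $O(k/h)$, not $O(k)$. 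With that ball bound in hand, your greedy/independent-set extraction and your Stirling estimate go through verbatim, so repairing the counting step along these lines completes your argument.
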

\begin{proof}
Since  $\lev(x,y)>h$ is equivalent to $\lev(x,y)>\floor{h}$ and $2^{1.95k}/(16k/h)^{3h}\leq 2^{1.95k}/(16k/\floor{h})^{3\floor{h}}$, we can safely assume that $h$ is an integer. 

Now, let $x$ be an arbitrary 3-2 sequence of length $2k$, and consider the number of 3-2 sequences $y$ of length $2k$ with $\lev(x,y)\leq h$. We can transform $x$ into $y$ by performing $r$ deletions followed by $s$ substitutions followed by $t$ insertions, where $r+s+t\leq h$. 
This leads to the following upper bound on the number of $y$'s:
\[
\sum_{r=0}^h {2k \choose r} \sum_{s=0}^{h-r} {2k-r \choose s}  \sum_{t=0}^{h-r-s} {2k-r-s+t \choose t}2^t 
 \leq (h+1)^3 {2k \choose h}^2{3k\choose h} 2^h.
 \]
 Using Stirling's approximation~\cite{robbins} and the fact that $(h+1)^3\leq 8^h$ for all $h\geq 2$, it follows that this is upper bounded by
\[  8^h (2ke/h)^{2h}(3ke/h)^h 2^h 
= 3^h(4ke/h)^{3h} 
\leq (16k/h)^{3h}.
\]

We now construct $\Sigma$ as follows. Let $\Sigma'$ denote the set of 3-2 sequences of length $2k$, and note that $|\Sigma'|= {2k \choose k}$. Pick an arbitrary 3-2 sequence $x$ from $\Sigma'$, add it to $\Sigma$ and remove all strings $y$ from $\Sigma'$ with $\lev(x,y)\leq h$. Continue by picking one of the remaining strings from $\Sigma'$, add it to $\Sigma$ and remove all strings  from $\Sigma'$ within distance $h$. When we run out of strings in $\Sigma'$ we will, according to the previous calculation and Stirling's approximation~\cite{robbins}, have
\[
|\Sigma|\geq \frac{{2k \choose k}}{(16k/h)^{3h}} \geq \frac{2^{2k-1}}{k^{1/2}(16k/h)^{3h}} \geq  \frac{2^{1.95k}}{(16k/h)^{3h}},
\]
where the last inequality follows from the fact that 
$2^{0.05k-1}\geq k^{1/2}$ whenever  $k\geq 90$.
\end{proof}

\subsection{3-2 trees and a lower bound.} \label{sec:32trees}
Given a 3-2 sequence $x=(x_1,\dots ,x_{2k})$ of length $2k$, we can create an associated tree of depth $2k$ where all nodes at depth $i-1$ have exactly $x_i$ children, and all nodes at depth $2k$ are leaves. We denote this tree the \emph{3-2 tree associated with $x$}. The number of nodes at depth $i$ in the 3-2 tree associated with $x$ is $x_1\cdots x_i$; in particular, the number of leaves is $x_1\cdots x_{2k}=6^k$. The number of nodes in total is upper bounded by $2\cdot 6^k$.

Consider the set of labels produced by an NCA labeling scheme for the nodes in a tree. Given a subset $S$ of these labels, let $S'$ denote the set of labels which can be generated from $S$ by the labeling scheme: thus, $S'$ contains the labels in $S$ as well as the labels for the NCAs of all pairs of nodes labeled with labels from $S$. The labels in $S'$ can be organized as a rooted tree according to their ancestry relations, which can be determined directly from the labels using the decoder of the labeling scheme and without consulting the original tree. The tree produced in this way is denoted $T^S$ and is uniquely determined from $S$.
Note that, if all the nodes in $S$ are leaves, then all internal nodes in $T^S$ must have been obtained as the NCA of two leaves, and hence must have at least two children.

Now, given a tree $T^S$ induced by a subset $S$ of labels assigned to the leaves of a tree $T$ by an NCA labeling scheme, we can create an integer sequence, $I(S)$, as follows. Start at the root of $T^S$, and let the first integer be the number of children of the root. Then recurse to a child $v$ for which the subtree $T_v^S$ contains a maximum number of leaves, and let the second integer be the number of children of this child. Continue this until a leaf is reached (without writing down the last $0$). Note that, if $T$ is a 3-2 tree of depth $2k$, the produced sequence $I(S)$ will have length at most $2k$ and will contain only 2s and 3s.

\begin{lemma} \label{lemm:32trees}
Let $T$ be a 3-2 tree associated with the 3-2 sequence $x=(x_1,\dots ,x_{2k})$. Let $S$ be a set of $m$ labels assigned to the leaves of $T$ by an NCA labeling scheme. Then $\lev(x,I(S))\leq \log_{3/2} (6^k/m)$.
\end{lemma}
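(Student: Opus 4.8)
The plan is to realize $T^S$ as a topological minor of $T$, translate the heavy-path descent defining $I(S)$ into an explicit list of edits applied to $x$, and then pay for each edit by a corresponding loss of leaves.

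First I would pin down $T^S$. Since the nodes of $S$ are leaves of $T$, all at depth $2k$, and the decoder of a (correct) NCA scheme returns genuine node labels, the set $S'$ consists of the $m$ leaves in $S$ together with their pairwise $T$-NCAs, and $T^S$ is exactly the tree carried by $S'$ with the ancestry relation inherited from $T$. From this I extract three facts: (i) the leaves of $T^S$ are precisely the $m$ nodes of $S$; (ii) every internal node of $T^S$ has at least two children (each is an NCA of two leaves, as already observed in the text); and (iii) for every node $u$, the number of children of $u$ in $T^S$ is at most the number of children of $u$ in $T$. For (iii): two distinct children $a,b$ of $u$ in $T^S$ are incomparable, hence $T$-incomparable, and $\nca(a,b)$ lies in $S'$ and is a common $T$-ancestor of $a,b$ below or equal to $u$; if it were a \emph{proper} descendant of $u$ it would sit strictly between $u$ and $a$ in $T^S$, contradicting that $a$ is a child of $u$; so $\nca(a,b)=u$, which forces $a$ and $b$ into distinct child-subtrees of $u$ in $T$.

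Next, set up the heavy path $u_0,u_1,\dots,u_\ell$ of $T^S$ defining $I(S)=(c_0,\dots,c_{\ell-1})$ with $c_j$ the number of children of $u_j$ in $T^S$; by (i)--(iii) each $c_j\in\{2,3\}$. The $u_j$ lie on one root-to-leaf path of $T$; put $i_j=\depth(u_j)$, so $0\le i_0<i_1<\dots<i_\ell=2k$ and the number of $T$-children of $u_j$ is $x_{i_j+1}$, whence $c_j\le x_{i_j+1}$ by (iii). Now transform $x$ into $I(S)$ as follows: delete the $i_0$ entries $x_1,\dots,x_{i_0}$; delete the $2k-i_0-\ell$ entries $x_{i+1}$ with $i_0\le i<2k$ and $i\notin\{i_0,\dots,i_{\ell-1}\}$; and at each of the $s:=\#\{j:c_j<x_{i_j+1}\}$ positions (necessarily with $c_j=2$, $x_{i_j+1}=3$) substitute the $3$ by a $2$. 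This shows $\lev(x,I(S))\le(2k-\ell)+s$.

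Finally I bound $(2k-\ell)+s$ by leaf counting. Let $L_j$ be the number of leaves of $S$ below $u_j$ in $T^S$, so $L_0=m$, $L_\ell=1$, and $L_{j+1}\ge L_j/c_j$ because $u_{j+1}$ is a heaviest child of $u_j$. Writing $6^k=\prod_{i=1}^{2k}x_i$ and telescoping along the path,
\[
\frac{6^k}{m}=\Big(\prod_{i=1}^{i_0}x_i\Big)\prod_{j=0}^{\ell-1}\frac{\prod_{i=i_j+1}^{i_{j+1}}x_i}{L_j/L_{j+1}}\ \ge\ 2^{i_0}\prod_{j=0}^{\ell-1}\Big(\frac{x_{i_j+1}}{c_j}\cdot 2^{\,i_{j+1}-i_j-1}\Big)\ =\ (3/2)^s\,2^{\,2k-\ell},
\]
using $x_i\ge2$, $L_j/L_{j+1}\le c_j$, and that each factor $x_{i_j+1}/c_j$ equals $1$ or $3/2$. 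Taking $\log_{3/2}$ and using $\log_{3/2}2\ge1$ together with $2k-\ell\ge0$ gives $\log_{3/2}(6^k/m)\ge s+(2k-\ell)\ge\lev(x,I(S))$, which is the claim. The only genuinely delicate point is fact (iii) (equivalently, the precise identification of $T^S$ as a topological minor of $T$); once that is in hand, the edit sequence and the telescoping are routine, the sole mild bookkeeping being the unrecorded prefix $x_1,\dots,x_{i_0}$ and the slack $\log_{3/2}2\ge1$.
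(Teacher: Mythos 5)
Your proof is correct and follows essentially the same route as the paper: you transform $x$ into $I(S)$ using only deletions and substitutions of a $3$ by a $2$ along the maximal-leaf-count descent in $T^S$, and then charge each edit against a loss of leaves, exactly the paper's potential argument. The only difference is presentational: you make explicit (via the injection of $T^S$-children into $T$-children and the telescoping product over the leaf counts $L_j$) two steps that the paper states tersely, namely that every internal node of $T^S$ has between $2$ and $x_{i+1}$ children and that each deletion (resp.\ substitution) costs a factor of at least $1/2$ (resp.\ $2/3$) in the number of leaves, giving $6^k/m\geq(3/2)^{\lev(x,I(S))}$.
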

\begin{proof}
We describe a way to transform $x$ into $I(S)$. Start at the root of $T$, and let $i$ be the depth in $T$ containing the node $v$ whose label $l(v)$ is the root in $T^S$. Delete all entries $x_1,\dots ,x_{i-1}$ from $x$ and compare the number of children of $l(v)$ in $T^S$ to $x_i$. If the numbers are the same, leave $x_i$ be; if not, we must have that $x_i=3$ and that the number of children of $l(v)$ is $2$, so replace $x_i$ by $2$. Then recurse to a child $w$ of $v$ in $T$ for which the corresponding subtree in $T^S$ contains a maximum number of leaves, and repeat the process with $T_w$, the corresponding subtree of $T^S$ and the remaining elements $x_{i+1},\dots ,x_k$.

Clearly, this transforms $x$ into $I(S)$ using only deletions and substitutions, where all substitutions replace a $3$ by a $2$. Each of these edits modify the maximum possible number of leaves in $T^S$ compared to $T$ with a factor of either $1/2$ or $2/3$. It follows that the number $m$ of leaves in $T^S$ satisfies $m\leq 6^k\cdot (2/3)^{\lev(x,I(S))}$, which implies $\lev(x,I(S))\leq\log_{3/2}(6^k/m)$ as desired.
\end{proof}

We now present our main lower bound result. The result is formulated for a family $\TTT$ that is large enough to contain all 3-2 trees with $N$ nodes; in particular, it holds for the family of all rooted trees with at most $N$ nodes. 

\begin{theorem}\label{theo:ncalowertrees}
If $\TTT$ is a family of trees that contains all 3-2 trees with up to $N\geq 2\cdot 3^{240}$ nodes, then any NCA labeling scheme for $\TTT$ has a worst-case label size of at least $1.008\log N - 318$.
\end{theorem}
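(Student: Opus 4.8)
The plan is to run the two lemmas in tandem over a large family of 3-2 trees and then extract a lower bound on the \emph{number} of distinct labels used by the scheme via a double-counting argument.

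\emph{Setup.} Let $k$ be the largest integer with $2\cdot 6^k\le N$; since $N\ge 2\cdot 3^{240}$ one checks $k\ge 90$, so Lemma~\ref{lemm:32seqs} applies. Fix $h=\lceil\alpha k\rceil$ for a small positive constant $\alpha$ to be pinned down at the end (so $2\le h\le k$), and let $\Sigma$ be the set of 3-2 sequences of length $2k$ from Lemma~\ref{lemm:32seqs}, with $|\Sigma|\ge 2^{1.95k}/(16k/h)^{3h}$ and $\lev(x,y)>h$ for all distinct $x,y\in\Sigma$. For each $x\in\Sigma$ let $T(x)$ be the 3-2 tree associated with $x$; it has $6^k$ leaves and at most $2\cdot 6^k\le N$ nodes, hence $T(x)\in\TTT$. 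Fix an NCA labeling scheme for $\TTT$ with worst-case label size $\ell$, write $S(x)$ for the set of labels it assigns to the leaves of $T(x)$, and set $L=\bigcup_{x\in\Sigma}S(x)$. Since distinct nodes of a tree necessarily get distinct labels, $|S(x)|=6^k$, and since there are only $2^{\ell+1}-1$ binary strings of length at most $\ell$, we have $|L|<2^{\ell+1}$. So it suffices to show $|L|>2^{(\log 6+\delta)k-1}$ for some absolute constant $\delta>0$: then $\ell>(\log 6+\delta)k-1$, and $k=\lfloor\log_6(N/2)\rfloor\ge(\log N-1)/\log 6-1$ turns this into $\ell\ge 1.008\log N-O(1)$.

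\emph{Small pairwise intersections.} The crucial observation is that the tree $T^S$ built from a set $S$ of labels, and hence the sequence $I(S)$, depends only on $S$ and the (fixed) decoder, not on the tree the labels came from. Fix distinct $x,y\in\Sigma$ and put $S=S(x)\cap S(y)$, $m=|S|$. Then $S$ is a set of leaf labels both of $T(x)$ and of $T(y)$, so Lemma~\ref{lemm:32trees} gives $\lev(x,I(S))\le\log_{3/2}(6^k/m)$ \emph{and} $\lev(y,I(S))\le\log_{3/2}(6^k/m)$ with the same $I(S)$. By the triangle inequality for Levenshtein distance, $h<\lev(x,y)\le 2\log_{3/2}(6^k/m)$, i.e. $m<6^k(2/3)^{h/2}=:\mu$.

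\emph{Counting and optimization.} Count the pairs $(\lambda,\{x,y\})$ with $x\ne y$ in $\Sigma$ and $\lambda\in S(x)\cap S(y)$. By the previous step this number is below $\binom{|\Sigma|}{2}\mu$; on the other hand, with $d_\lambda=|\{x\in\Sigma:\lambda\in S(x)\}|$ it equals $\sum_{\lambda\in L}\binom{d_\lambda}{2}$, where $\sum_{\lambda\in L}d_\lambda=|\Sigma|\,6^k$. By Cauchy--Schwarz, $\sum_\lambda\binom{d_\lambda}{2}\ge\tfrac12\bigl((|\Sigma|6^k)^2/|L|-|\Sigma|6^k\bigr)$, so rearranging gives
\[
|L|>\frac{6^{2k}}{\mu+6^k/|\Sigma|}.
\]
Substituting $|\Sigma|\ge 2^{1.95k}/(16k/h)^{3h}$ yields $6^k/|\Sigma|\le 6^k(16k/h)^{3h}2^{-1.95k}$. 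With $h=\lceil\alpha k\rceil$ and $\alpha$ chosen small enough that $3\alpha\log(16/\alpha)+\tfrac12\alpha\log\tfrac32<1.95$, both $\mu=6^k(2/3)^{h/2}$ and $6^k/|\Sigma|$ are at most $6^k\,2^{-\delta k}$ for some $\delta=\delta(\alpha)>0$ once $k\ge 90$ absorbs the $O(1)$ terms; hence $|L|>6^{2k}2^{\delta k-1}=2^{(\log 6+\delta)k-1}$, as needed. Choosing $\alpha$ so that $1+\delta/\log 6\ge 1.008$ and bookkeeping the constants finishes the bound $\ell\ge 1.008\log N-318$.

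\emph{Main obstacle.} The conceptual heart is the second step: that $I(S)$ is an invariant of $S$ alone, so one common label set feeds \emph{two} Levenshtein estimates --- which is exactly where the decoder's obliviousness to the underlying tree is used. The technically delicate part is the counting/optimization: plain inclusion--exclusion is useless here because $|\Sigma|$ is far too large for $\sum|S(x)|-\sum|S(x)\cap S(y)|$ to stay positive, so one needs the convexity bound; and squeezing a multiplicative constant strictly above $1$ out of it (rather than merely $\log N+O(1)$) forces $h$ to be linear in $k$, living in the narrow window where $|\Sigma|$ is still exponential ($3h\log(16k/h)<1.95k$) yet $\mu$ is already an exponentially small fraction of $6^k$.
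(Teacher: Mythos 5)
Your proposal is correct, and its skeleton is the paper's: the same two lemmas (\Cref{lemm:32seqs} and \Cref{lemm:32trees}), the same crucial observation that $I(S)$ is determined by the common label set $S$ alone so that one intersection feeds two Levenshtein estimates and the triangle inequality forces $|S(x)\cap S(y)|<6^k(2/3)^{h/2}$, and the same regime $h=\Theta(k)$ (the paper's choice $m=n^{119/120}$, $h=2\log_{3/2}(n/m)$ amounts to $\alpha=\tfrac{1}{60}\log_{3/2}6\approx 0.074$, squarely inside your admissible window). The only real divergence is the last counting step: the paper restricts attention to $\min(|\Sigma|,\floor{n/(2m)})$ of the trees and argues that each then retains at least $n/2$ leaf labels unique to it, giving $\geq n^{1.008}/8$ distinct labels by a direct union-type count, whereas you keep all of $\Sigma$ and apply a second-moment/Cauchy--Schwarz bound on $\sum_\lambda\binom{d_\lambda}{2}$, obtaining $|L|>6^{2k}/(\mu+6^k/|\Sigma|)$; the two counts are equivalent in strength (both give $|L|\gtrsim \tfrac n2\min(|\Sigma|,n/\mu)$), so neither buys extra constants, and the paper's version is slightly more elementary while yours avoids having to discard trees. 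Two bookkeeping remarks: your appeal to ``$k\ge 90$ absorbs the $O(1)$ terms'' is a bit tight for $\alpha$ near the top of your window, but the hypothesis actually gives $k=\floor{\log_6(N/2)}\ge 147$, which comfortably closes it and leaves the additive constant far below $318$; and your parenthetical that distinct nodes ``necessarily'' get distinct labels is not literally forced by the stated definition (a constant labeling is formally consistent with a decoder), but it is the standard convention and the paper's own count of $n$ leaf labels per tree relies on exactly the same assumption, so this is not a gap relative to the paper.
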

\begin{proof}
Let $k=120\floor{\frac{1}{120}\log_6 (N/2)}$ be $\log_6 (N/2)$ rounded down to the nearest multiple of $120$, and let $n=6^k\leq N/2$. 
Further, set $m=n^{119/120}$ and $h=2\log_{3/2}(n/m)$. Note that $n$, $m$ and $n/m=n^{1/120}$ are all integers. Observe also that $n > (N/2)/6^{120}\geq (3/2)^{120}$ and thereby that $h\geq 2$. Finally, observe that $h=\frac{1}{60}k\log_{3/2}6\leq k$ and that $k\geq 120$.

According to \Cref{lemm:32seqs}, there exists a set $\Sigma$ of 3-2 sequences of length $2k$ with $|\Sigma| \geq 2^{1.95k}/(16k/h)^{3h}$ and $\lev(x,y)>h$ for all $x,y\in\Sigma$.
The set $\Sigma$ defines a set of $|\Sigma|$ associated 3-2 trees with $n$ leaves and at most $2 n\leq N$ nodes. In particular, all the associated trees belong to $\TTT$.
We can estimate the number of elements in $\Sigma$ as follows: 
\begin{align*}
|\Sigma| &\geq \frac{2^{1.95k}}{(16k/h)^{3h}} \\
& = \frac{2^{1.95\log_6 n}}{(8\log_6 n/\log_{3/2}(n/m))^{6\log_{3/2}(n/m)}} \\
& = \frac{n^{1.95\log_6 2}}{(960\log_6 n/\log_{3/2}n)^{(6\log_{3/2}n)/120}} \\
& = \frac{n^{1.95\log_6 2}}{(960\log_6 (3/2))^{0.05\log_{3/2}n}} \\
& = \frac{n^{1.95\log_6 2}}{n^{0.05\log_{3/2}(960\log_6(3/2))}} \\
& = n^{1.95\log_6 2 - 0.05\log_{3/2}(960\log_6(3/2))}  \\
& \geq n^{0.09}
\end{align*}

Now suppose that an NCA labeling scheme labels the nodes of all 3-2 trees associated with sequences in $\Sigma$. Consider two trees associated with sequences $x,y\in \Sigma$, and let $S$ denote the set of leaf labels that are common to $x$ and $y$. We must then have $|S|<m$, since otherwise, by \Cref{lemm:32trees}, we would have 
\[
\lev(x,y)\leq \lev(x,I(S))+\lev(I(S),y) \leq \frac{h}{2}+\frac{h}{2} = h.
\]
It follows that, if we restrict attention to a subset $\TTT$ consisting of $\min(|\Sigma|,\floor{n/(2m)})$ of the trees associated with strings in $\Sigma$, then the leaves of any tree in $\TTT$ can share a total of at most $n/2$ labels with all other trees in $\TTT$. In other words, every tree in $\TTT$ has at least $n/2$ leaf labels that are unique for this tree within the set of all leaf labels of trees in $\TTT$. This gives a total of at least
\begin{align*}
\frac{n}{2}\min(|\Sigma|,\floor{n/(2m)}) 
&= \frac{n}{2}\min(n^{0.09},\floor{n^{1/120}/2}) \\
&= n^{121/120}/8 \\
& \geq n^{1.008}/8
\end{align*}
distinct labels. If the worst-case label size is $L$, we can create $2^{L+1}-1$ distinct labels, and we must therefore have $n^{1.008}/8 \leq 2^{L+1}-1$ from which it follows that
\begin{align*}
L\geq \floor{1.008\log n} -3 &\geq \floor{1.008\log (N/2\cdot 6^{120})} -3 \\
& \geq 1.008\log N - 318. \qedhere
\end{align*}
\end{proof}

\section{Upper bound}
In this section we construct an NCA labeling scheme that assigns to every node a label consisting of a sequence of sub-labels, each of which is constructed from a decomposition of a tree known as heavy-light decomposition. The labeling scheme is similar to that of~\cite{AGKR04} but with a different way of constructing sub-labels (presented in \Cref{sec:onencascheme}), a different way of ordering sub-labels (presented in \Cref{sec:ordering}) and a different way of encoding lists of sub-labels (presented in \Cref{sec:encodings}). 
\subsection{Encodings.} \label{sec:encodings}
We begin with a collection of small results that show how to efficiently encode sequences of binary strings.
\begin{lemma} \label{lemm:labels}
A collection of $n$ objects can be uniquely labeled with binary strings of length at most $L$ if and only if $L\geq\floor{\log n}$.
\end{lemma}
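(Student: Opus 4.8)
The plan is to prove \Cref{lemm:labels} by a straightforward counting argument, splitting into the two directions of the biconditional.

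First I would handle the ``if'' direction. Suppose $L \geq \floor{\log n}$. The number of binary strings of length at most $L$ is $\sum_{i=0}^{L} 2^i = 2^{L+1} - 1$. Since $L \geq \floor{\log n}$, we have $2^{L+1} - 1 \geq 2^{\floor{\log n}+1} - 1 \geq 2^{\log n} - 1 = n - 1$; more carefully, $2^{\floor{\log n}+1} > 2^{\log n} = n$, so $2^{L+1} - 1 \geq n$ (using integrality, since $2^{\floor{\log n}+1}$ is an integer strictly greater than $n$, it is at least $n+1$, hence $2^{L+1}-1 \geq n$). Thus there are at least $n$ distinct binary strings of length at most $L$, so we can assign a distinct one to each of the $n$ objects.

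For the ``only if'' direction, suppose the $n$ objects can be uniquely labeled with binary strings of length at most $L$. Then the $n$ labels are distinct strings drawn from the set of $2^{L+1}-1$ strings of length at most $L$, so $n \leq 2^{L+1} - 1 < 2^{L+1}$, giving $\log n < L+1$ and hence $\log n \leq L$ (we may pass to $\floor{\log n} \leq L$ since $L$ is an integer, or simply note $\floor{\log n} \leq \log n < L+1$ forces $\floor{\log n} \leq L$). Either way, $L \geq \floor{\log n}$.

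I do not expect any real obstacle here — the only thing to be careful about is the interaction between floors, strict versus non-strict inequalities, and the integrality of $L$, so that the boundary cases (e.g.\ $n$ an exact power of $2$) come out correctly. It is worth double-checking that $2^{L+1}-1$, not $2^L$, is the right count: labels of length \emph{at most} $L$ include the empty string and all strings of lengths $0, 1, \dots, L$, which is a geometric sum equal to $2^{L+1}-1$.
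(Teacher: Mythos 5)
Your proof is correct and uses essentially the same counting argument as the paper: both count the $2^{L+1}-1$ binary strings of length at most $L$ and compare this to $n$. The only difference is cosmetic — the paper condenses the two directions into the single equivalence $n\leq 2^{L+1}-1 \iff L\geq\ceil{\log(n+1)}-1=\floor{\log n}$, whereas you verify each direction separately with floor/integrality arguments.
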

\begin{proof}
There are $2^L$ binary strings of length $L$, and hence there are $2^{L+1}-1$ binary strings of length at most $L$. Thus, we can create unique labels for $n$ different objects using labels of length at most $L$ whenever $n\leq 2^{L+1}-1$, which is equivalent to $L\geq \ceil{\log(n+1)}-1 = \floor{\log n}$. (The latter equality follows from the simple fact that $\floor{r}=\ceil{s}-1$ for all real numbers $r<s$ for which there does not exist an integer $z$ with $r<z<s$.)
\end{proof}

\begin{lemma} \label{lemm:labelssamesize}
A collection of $n$ objects can be uniquely labeled with binary strings of length \emph{exactly} $L$ if and only if $L\geq\ceil{\log n}$.
\end{lemma}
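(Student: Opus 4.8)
The plan is to mirror the counting argument used for \Cref{lemm:labels}, but now restricting attention to strings of a single fixed length. First I would note that there are exactly $2^L$ binary strings of length exactly $L$. A unique labeling of $n$ objects by such strings is precisely an injection from the $n$ objects into this set of $2^L$ strings, so by pigeonhole such a labeling exists if and only if $n\leq 2^L$; the forward direction is immediate, and for the backward direction one simply picks $n$ distinct length-$L$ strings and assigns them arbitrarily.

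It then remains to check that $n\leq 2^L$ is equivalent to $L\geq\ceil{\log n}$. Taking base-$2$ logarithms, $n\leq 2^L$ is equivalent to $\log n\leq L$; and since $L$ is an integer, $\log n\leq L$ holds exactly when $L$ is at least the least integer that is $\geq\log n$, which is by definition $\ceil{\log n}$.

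There is no real obstacle here: the only point deserving a word of care is the passage from the real inequality $\log n\leq L$ to the integer inequality $L\geq\ceil{\log n}$ (using that $L\in\ZZ$), together with being explicit that we are counting strings of length \emph{exactly} $L$, so that the relevant count is $2^L$ rather than the $2^{L+1}-1$ appearing in \Cref{lemm:labels}. The statement is essentially a restatement of the definition of the ceiling function combined with the pigeonhole count, so the write-up should be only a few lines.
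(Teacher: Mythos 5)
Your argument is correct and is essentially the paper's own: count the $2^L$ strings of length exactly $L$, observe a unique labeling is an injection existing precisely when $n\leq 2^L$, and convert this to $L\geq\ceil{\log n}$ using integrality of $L$. No issues.
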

\begin{proof}
The argument is similar to the one in \Cref{lemm:labels}, but with the modification that we only use labels of length \emph{exactly} equal to $L$. This yields the inequality $n\leq 2^L$, which is equivalent to $L\geq \ceil{\log n}$.
\end{proof}

\Cref{lemm:labels,lemm:labelssamesize} can only be efficiently implemented if there is a way to efficiently implement the 1-1 correspondence between the objects and the numbers $1,\dots ,n$. The remaining lemmas of this section show how to encode sequences of binary strings whose concatenation has length $t$, and all of them except \Cref{lemm:encodenonemptypairs2} can be implemented with linear time encoding and constant time decoding on a RAM machine in which a machine word has size $O(t)$.

\begin{lemma} \label{lemm:encodepairs}
Let $a=(a_1,a_2)$ be a pair of (possibly empty) binary strings with $|a_1\concat a_2|=t$. We can encode $a$ as a single binary string of length $t+\ceil{\log t}$ such that a decoder without any knowledge of $a$ or $t$ can recreate $a$ from the encoded string alone.
\end{lemma}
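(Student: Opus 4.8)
The plan is to build the encoding out of two pieces: a short fixed-length description of where $a_1$ ends, followed by the raw concatenation of the two strings. Concretely, I would set $j=|a_1|\in\{0,1,\dots,t\}$, write $j$ as a binary string $p$ of fixed length $\ell=\ceil{\log t}$ (padding with leading zeros), and define the encoding of $a$ to be $p\concat a_1\concat a_2$. Its length is $\ell+|a_1\concat a_2|=t+\ceil{\log t}$, as claimed. The only thing this piece needs is \Cref{lemm:labelssamesize}, used to turn the $\ceil{\log t}$ bits into a name for the split position.

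The crux of the argument is the decoding direction, and in particular the fact that the decoder can recover $t$ — and hence $\ell$ — from nothing but the \emph{length} $N$ of the encoded string. Here the key observation is that the function $t\mapsto t+\ceil{\log t}$ is strictly increasing: raising $t$ by $1$ raises the first summand by $1$ and never lowers the second. Hence this map is injective, so $t$ is the unique integer with $t+\ceil{\log t}=N$, and the decoder can compute it (either by a short monotone search, or by an explicit inversion). Once $t$, and therefore $\ell$, is known, the decoder reads the first $\ell$ bits as the binary representation of $j$, and then splits the remaining $t$ bits into a length-$j$ prefix $a_1$ and a length-$(t-j)$ suffix $a_2$, recovering $a$ exactly.

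The step I expect to be the most delicate is the bookkeeping around the split position: there are $t+1$ admissible values of $j$ but only $2^{\ceil{\log t}}$ codewords of length $\ceil{\log t}$, which is enough unless $t$ is a power of two, where one extra value must be accommodated — either by spending $\ceil{\log(t+1)}=\ceil{\log t}+1$ bits (a difference swallowed by the $O(1)$ and lower-order terms used throughout the paper) or by reserving a single codeword for the extreme case $j=t$, i.e. $a_2=\emptystring$. The boundary case $t=0$ is handled trivially by letting the encoding be the empty string. Finally, everything here runs within the stated resource bounds: on a RAM with word size $O(t)$, inverting $N\mapsto t$, extracting the length-$\ell$ prefix, and cutting the data block at position $j$ are all constant-time word operations, and producing $p\concat a_1\concat a_2$ takes $O(t)$ time.
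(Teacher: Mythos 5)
Your proposal is essentially the paper's own proof: prepend a fixed-length $\ceil{\log t}$-bit encoding of $|a_1|$ to $a_1\concat a_2$, and let the decoder recover $t$ (hence the header length) from the total length, using that $t\mapsto t+\ceil{\log t}$ is strictly increasing. The power-of-two subtlety you flag ($t+1$ possible values of $|a_1|$ versus $2^{\ceil{\log t}}=t$ codewords) is present but glossed over in the paper's proof as well, which simply invokes \Cref{lemm:labelssamesize} with $\ceil{\log t}$ bits, so your treatment is, if anything, the more careful one.
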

\begin{proof}
Since $|a_1|\leq |a_1\concat a_2|=t$, we can use \Cref{lemm:labelssamesize} to encode $|a_1|$ with exactly $\ceil{\log t}$ bits. We then encode $a$ by concatenating the encoding of $|a_1|$ with $a_1\concat a_2$ to give a string of exactly $t+\ceil{\log t}$ bits. Since $t$ is uniquely determined from $t+\ceil{\log t}$, the decoder can split up the encoded string into the encoding of $|a_1|$ and the concatenation $a_1\concat a_2$ from which it can recreate $a_1$ and $a_2$.
\end{proof}

We thank Mathias B{\ae}k Tejs Knudsen 
for inspiring parts of the proof of \Cref{lemm:encodenonemptypairs2} below. As the proof shows, the encoding in \Cref{lemm:encodenonemptypairs2} is optimal with respect to size but comes with no guarantees for time complexities. \Cref{lemm:encodenonemptypairs} further below is a suboptimal version of \Cref{lemm:encodenonemptypairs2} but with a more efficient implementation.
\begin{lemma} \label{lemm:encodenonemptypairs2}
Let $a=(a_0,\dots ,a_{2k})$ be a list of (possibly empty) binary strings with $|a_0\concats a_{2k}|=t$ and with $a_{2i}\concat a_{2i+1}\neq \emptystring$ for all $i<k$. We can encode $a$ as a single binary string of length $\ceil{(1+\log(2+\sqrt{2}))t}$ such that a decoder without any knowledge of $a$, $t$ or $k$ can recreate $a$ from the encoded string alone.
\end{lemma}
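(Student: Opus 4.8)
The plan is to reduce the lemma to a counting bound and then invoke \Cref{lemm:labelssamesize}. Set $\gamma = 1 + \log(2+\sqrt{2})$, and for each $t\ge 0$ let $\mathcal{V}_t$ be the set of all lists $a=(a_0,\dots,a_{2k})$ (over all $k\ge 0$) satisfying the hypotheses with $|a_0\concats a_{2k}|=t$. Such a list is uniquely recovered from its profile of lengths $(|a_0|,\dots,|a_{2k}|)$ together with the concatenation $a_0\concats a_{2k}\in\{0,1\}^t$, and conversely every length profile with $|a_{2i}|+|a_{2i+1}|\ge 1$ for $i<k$, decorated by an arbitrary string of length $t$, yields such a list; hence $|\mathcal{V}_t| = 2^t\,L(t)$, where $L(t)$ is the number of \emph{length profiles}, i.e.\ finite sequences $(\ell_0,\dots,\ell_{2k})$ of nonnegative integers with $\sum_i\ell_i=t$ and $\ell_{2i}+\ell_{2i+1}\ge 1$ for all $i<k$. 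I will show $|\mathcal{V}_t|\le 2^{\ceil{\gamma t}}$. Granting this, \Cref{lemm:labelssamesize} gives, for each $t$, an injection of $\mathcal{V}_t$ into $\{0,1\}^{\ceil{\gamma t}}$, which the encoder applies. The decoder, seeing a string $s$, first recovers $t$ as the unique value with $\ceil{\gamma t}=|s|$ — unique because $t\mapsto\ceil{\gamma t}$ is strictly increasing, as $\gamma>1$ forces $\ceil{\gamma(t+1)}\ge\ceil{\gamma t}+\floor{\gamma}>\ceil{\gamma t}$ — and then inverts the injection associated with that $t$; thus no knowledge of $a$, $t$ or $k$ is needed.

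It remains to bound $L(t)$. A length profile decomposes uniquely as a concatenation of $k\ge 0$ ``pair blocks'' $(\ell_{2i},\ell_{2i+1})$, each with $\ell_{2i}+\ell_{2i+1}\ge 1$, followed by a single nonnegative integer $\ell_{2k}$. The number of ordered pairs of nonnegative integers summing to $m$ is $m+1$, so the generating function of one pair block is $\sum_{m\ge 1}(m+1)x^m = \frac{1}{(1-x)^2}-1 = \frac{2x-x^2}{(1-x)^2}$, while the trailing integer contributes $\frac{1}{1-x}$. Hence $\sum_{t\ge 0}L(t)x^t = \frac{1}{1-(2x-x^2)/(1-x)^2}\cdot\frac{1}{1-x} = \frac{1-x}{1-4x+2x^2}$, which yields $L(0)=1$, $L(1)=3$, and $L(t)=4L(t-1)-2L(t-2)$ for $t\ge 2$. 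The characteristic roots are $2\pm\sqrt{2}$, so $L(t)=A(2+\sqrt{2})^t+B(2-\sqrt{2})^t$ with $A=\frac{1}{2}(1+\frac{1}{\sqrt{2}})$ and $B=\frac{1}{2}(1-\frac{1}{\sqrt{2}})$; note $A,B>0$ and $A+B=1$. Since $0<2-\sqrt{2}<2+\sqrt{2}$, we get $L(t)\le (A+B)(2+\sqrt{2})^t = (2+\sqrt{2})^t$, whence $|\mathcal{V}_t| = 2^t L(t)\le (2(2+\sqrt{2}))^t = (4+2\sqrt{2})^t = 2^{\gamma t}\le 2^{\ceil{\gamma t}}$, as required.

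The step I expect to be the main obstacle is obtaining the bound on $L(t)$ with the \emph{exact} constant rather than merely the right exponential rate $2+\sqrt{2}$: a slack factor $C>1$ in $L(t)\le C(2+\sqrt{2})^t$ would destroy the clean ceiling $\ceil{\gamma t}$ in the statement. This is precisely why the generating-function computation matters — the linear recurrence solves in closed form with $A+B=1$ forced by $L(0)=1$, so the subdominant term $B(2-\sqrt{2})^t$ only \emph{decreases} $L(t)$ relative to $(2+\sqrt{2})^t$ and no slack is lost. (The same closed form gives $L(t)\ge A(2+\sqrt{2})^t$, so $\ceil{\gamma t}$ bits are also necessary up to an additive constant, which is the ``optimal with respect to size'' claim.) A minor point to verify carefully is the strict monotonicity of $t\mapsto\ceil{\gamma t}$, which is what lets the decoder read off $t$ from the code length alone.
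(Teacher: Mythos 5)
Your proof is correct and follows essentially the same route as the paper: for each fixed $t$ you count the admissible length profiles, arrive at the recurrence $L(t)=4L(t-1)-2L(t-2)$ with $L(0)=1$, $L(1)=3$, solve it with roots $2\pm\sqrt{2}$ to get $L(t)\leq(2+\sqrt{2})^t$, and then invoke \Cref{lemm:labelssamesize} for each $t$, letting the decoder recover $t$ from the code length. The only difference is cosmetic: you obtain the recurrence via a generating function, whereas the paper derives the same recurrence (its $s_t$) from the direct recursion $s_t=1+\sum_{j=1}^{t}s_{t-j}(j+1)$, yielding the identical closed form and constant.
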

\begin{proof}
We will use \Cref{lemm:labelssamesize} to encode $a$ for a fixed $t$. To do this, we must count the number of possible sequences in the form of $a$. There are $2^t$ choices for the $t$ bits in the concatenation $a_0\concats a_{2k}$, and every subdivision of the concatenation into the substrings $a_i$ corresponds to a solution to the equation
\[
x_0+x_1+\dots +x_{2k}=t
\]
where $x_{2i}+x_{2i+1}\geq 1$ for $i=0,\dots ,k-1$. Note that we must have $k\leq t$.  For a given $t$, let $s_t$ denote the number of solutions (including choices of $k$) to the above equation. We shall prove further below that 
\begin{equation}\label{eq:lengthexpression}
s_t=\frac{1}{4}c^{t+1}+\frac{1}{4}d^{t+1},
\end{equation}
where $c=2+\sqrt{2}$ and $d=2-\sqrt{2}$, which easily implies $s_t\leq (2+\sqrt{2})^t$. It then follows that the total number of sequences $a$ for fixed $t$ is bounded by $2^t(2+\sqrt{2})^t$, and using \Cref{lemm:labelssamesize} we can therefore encode any such $a$ as a string with \emph{exactly} $\ceil{(1+\log(2+\sqrt{2}))t}$ bits. Since $t$ is uniquely determined by this length, the decoder can determine $t$ from the length of the string and then use \Cref{lemm:labelssamesize} to recreate $a$.

It remains to show~\eqref{eq:lengthexpression}. For any $t$, the number of solutions with $k=0$ is $1$. 
Given a solution where $k>0$, let $j=x_0+x_1$, and note that $j\geq 1$ and that $x_2+\cdots +x_{2k}=t-j$ is a solution to the problem for $t-j$. There are $j+1$ solutions to $x_0+x_1=j$, and hence the total number of solutions is
\[
s_t = 1+ \sum_{j=1}^{t} s_{t-j}(j+1).
\]
Using this expression, it is straightforward to see that
\[
s_t-2s_{t-1}+s_{t-2}=2s_{t-1}-s_{t-2},
\]
which implies $s_t = 4s_{t-1}-2s_{t-2}$. The characteristic polynomial of this recurrence relation has roots $c$ and $d$, and hence $s_t=\alpha c^t+\beta d^t$ for some $\alpha,\beta$. Using $s_0=1$ and $s_1=3$ to solve, we obtain $\alpha=c/4$ and $\beta=d/4$, which proves~\eqref{eq:lengthexpression}.
\end{proof}

\begin{lemma} \label{lemm:encodenonemptypairs}
Let $a=(a_0,\dots ,a_{2k})$ be a list of (possibly empty) binary strings with $|a_0\concats a_{2k}|=t$ and with $a_{2i}\concat a_{2i+1}\neq \emptystring$ for all $i<k$. We can encode $a$ as a single binary string of length $3t$ such that a decoder without any knowledge of $a$, $t$ or $k$ can recreate $a$ from the encoded string alone.
\end{lemma}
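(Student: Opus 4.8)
The plan is to give an explicit, efficient encoding that matches the counting structure of \Cref{lemm:encodenonemptypairs2} but uses a self-delimiting prefix code instead of the optimal enumeration. The key observation is that the constraint $a_{2i}\concat a_{2i+1}\neq\emptystring$ for all $i<k$ means that each \emph{pair} $(a_{2i},a_{2i+1})$ contributes at least one bit to the total length $t$; hence $k\leq t$, and there are at most $t$ such mandatory pairs plus one final string $a_{2k}$. The idea is to spend a bounded number of ``overhead'' bits per contributed bit of content, so that the total is $O(t)$ with a constant of $3$.

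First I would describe the encoding. I will encode the list by walking through it and, for each string $a_j$, emitting its bits preceded by a self-delimiting description of its length. A standard Elias-type unary/binary length prefix for a string of length $\ell$ costs roughly $2\log\ell+O(1)$ bits, which is too expensive if many strings are long; instead I would use a bit-interleaving trick: write each content bit of $a_j$ paired with a flag bit that says ``more bits in this string follow'', and a terminating flag bit ``this string is finished''. This costs $2\ell+1$ bits to encode a string of length $\ell$ together with its end marker. To mark where one pair $(a_{2i},a_{2i+1})$ ends and the next begins (and where the whole list ends), I would use one additional flag bit after each pair. Because each pair has combined length $\ge 1$, the number of pairs is at most $t$, and the final lone string $a_{2k}$ is the $(k+1)$st block; summing $2\ell_j+1$ over all strings gives $2t+(2k+1)$, and the pair-separator flags add at most $k+1$ more, for a total of $2t+3k+2$. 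Since $k\le t$ this is at most $5t+2$ — which is worse than $3t$, so the naive interleaving is not tight enough, and I will need to be more careful.

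The main obstacle, therefore, is squeezing the overhead down to a factor of $3$ rather than $5$. The fix is to amortize the structural markers against the content bits that the constraint guarantees. Concretely, I would encode the sequence of pairs $(a_0,a_1),(a_2,a_3),\dots$ where each pair has length $\ge 1$: for a pair of total length $p\ge 1$ I can afford $3p$ bits, so I have a budget of $3p\ge 3$ bits for $p$ content bits, leaving $2p\ge 2$ bits of overhead, which suffices to encode the split point $|a_{2i}|\in\{0,\dots,p\}$ (at most $\ceil{\log(p+1)}\le p+1$ bits via \Cref{lemm:labelssamesize}), a length prefix for $p$ itself, and a continuation flag — and one checks $\ceil{\log(p+1)}+2\log p+O(1)\le 2p$ for all $p\ge 1$. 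The final string $a_{2k}$, of length $\ell$, is handled by the same length-prefixed encoding costing $\le 2\ell+2\log\ell+O(1)$, which is within the $3\ell$ budget for $\ell\ge 1$ and costs $O(1)$ when $\ell=0$; any additive constants are absorbed because we can pad, or more cleanly, we append a short header giving $t$ (costing $\ceil{\log t}$, dominated) so the decoder knows when to stop. Summing the per-pair budgets gives $\sum 3p_i + 3\ell + O(\log t)\le 3t + O(\log t)$, and a final rounding/padding argument — together with the freedom to handle small $t$ by brute force — yields the clean bound $3t$. For efficiency, all of this is a single left-to-right pass writing flags and bits (linear-time encoding), and decoding reads the self-delimiting blocks in order; since a machine word has size $O(t)$, each length prefix is read in constant time, giving constant-time decoding on the RAM.
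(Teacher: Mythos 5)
There is a genuine gap in the accounting, and it occurs exactly where the scheme needs to be tight. Your per-pair budget claim, that the overhead $\lceil\log(p+1)\rceil+2\log p+O(1)$ (split point, self-delimiting length prefix, continuation flag) fits into $2p$ bits for all $p\geq 1$, fails for small $p$, which is the generic case. For $p=1$ an Elias-type prefix for $p$ costs $1$ bit, the split point costs $1$ bit, and any continuation flag already pushes you to $3>2$; for $p=2$ the prefix costs $3$ bits and the split point $2$ bits, so the overhead is at least $5>4$ even with no flag. A sequence consisting entirely of pairs of total length $1$ or $2$ therefore costs roughly $3.5t$ rather than $3t$ --- the excess is a constant factor per pair, not an additive constant. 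The closing step is also not valid as stated: from a bound of $3t+O(\log t)$ you cannot get to $3t$ by padding (padding only lengthens the string) or by brute-forcing small $t$ (the $O(\log t)$ excess persists for all $t$); and the proposed header ``giving $t$ in $\lceil\log t\rceil$ bits'' cannot be parsed by a decoder that does not yet know $t$ unless it is itself self-delimiting, which again costs $\Theta(\log t)$ bits you have not budgeted for. So both the per-block arithmetic and the final absorption argument need to be repaired before this line of attack yields $3t$.

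For comparison, the paper avoids variable-length prefixes altogether: it outputs three fixed-length strings of lengths $t$, $t-1$ and $t+1$ --- the raw concatenation $\tilde a=a_0\concats a_{2k}$, an indicator vector marking the positions where a (nonempty) block $a_{2j}\concat a_{2j+1}$ begins, and an indicator vector marking where $a_{2j+1}$ (or $a_{2k}$) begins, with one extra bit recording whether $a_{2k}$ is empty. The total length is exactly $3t$, so the decoder recovers $t$ from the length of the encoding and then splits $\tilde a$ using the two indicator strings; the constraint $a_{2i}\concat a_{2i+1}\neq\emptystring$ is what makes the first indicator vector unambiguous. This costs exactly $3$ bits per content bit with no per-block overhead, which is precisely the slack your prefix-code approach cannot afford at $p\in\{1,2\}$.
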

\begin{proof}
We encode $a$ as a concatenation of three binary strings of lengths $t$, $t-1$ and $t+1$, respectively. The first string is the concatenation $\tilde a =a_0\concats a_{2k}$. The second string has a $\one$ in the $i$'th position for $i\leq t-1$ exactly when the $(i+1)$'th position of $\tilde a$ is the first bit in a substring $a_{2j}\concat a_{2j+1}$ (which by the assumption is nonempty for all $j$). The third string has a $\one$ in the $i$'th position for $i\leq t$ exactly when the $i$'th position of $\tilde a $ is the first bit in a substring $a_{2j+1}$ for some $j$ or in $a_{2k}$, and a $\one$ in the $(t+1)$'th position exactly when $a_{2k}\neq \emptystring$.

If the decoder receives the concatenation of length $3t$ of these three strings, it can easily recreate the three strings by splitting up the string into three substrings of sizes $t$, $t-1$ and $t+1$. The first string is $\tilde a$, which it can then split up at all positions where the second string has a $\one$.  This gives a list of nonempty strings in the form $a_{2i}\concat a_{2i+1}$ for $i\leq k-2$ as well as the string $a_{2k-2}\concat a_{2k-1}\concat a_{2k}$. The decoder can then use the third string  to split up each of these concatenations as follows. For every (nonempty) concatenation $a_{2i}\concat a_{2i+1}$, consider the corresponding bits in the third string. If one of these bits is a $\one$, then the concatenation should be split up at that position; in particular, if the $\one$ is at the first bit in the concatenation, then it means that $a_{2i}$ is empty. If none of the bits is a $\one$, then it means that $a_{2i+1}$ is empty. In all cases, we can recreate $a_{2i}$ and $a_{2i+1}$. Likewise, the concatenation $a_{2k-2}\concat a_{2k-1}\concat a_{2k}$ can be split up using the $\one$s in the corresponding bits in the third string. If there are two $\one$s among these bits, then it is clear how to split up the concatenation. If there are no $\one$s, then it means that $a_{2k-1}$ and $a_{2k}$ are both empty. If there is exactly one $\one$, then we can split up the concatenation into $a_{2k-2}$ and $a_{2k-1}\concat a_{2k}$, and exactly one of $a_{2k-1}$ and $a_{2k}$ must be empty. The last bit of the third string determines which of these two cases we are in.
\end{proof}

\begin{lemma} \label{lemm:encodenonemptyconsecutives}
Let $a=(a_0,\dots ,a_k)$ be a list of (possibly empty) binary strings with $|a_0\concats a_k|=t$ and with $a_i\concat a_{i+1}\neq \emptystring$ for all $i< k$. We can encode $a$ as a single binary string of length $\ceil{(1+\log 3)(t-1)}+3$ such that a decoder without any knowledge of $a$, $t$ or $k$ can recreate $a$ from the encoded string alone.
\end{lemma}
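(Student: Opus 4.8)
The plan is to follow the strategy of \Cref{lemm:encodenonemptypairs2}: for each fixed total length $t$ I would count the number $M_t$ of admissible lists $a$ with $|a_0\concats a_k|=t$, determine $M_t$ in closed form, and then apply \Cref{lemm:labelssamesize} to give every such list a distinct codeword of exactly $\ceil{\log M_t}$ bits. Since this length will turn out to determine $t$, the decoder can read $t$ off the length of the received string and then invoke the decoder of \Cref{lemm:labelssamesize} for that $t$.

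First I would reduce to a purely combinatorial count. Setting $x_i=|a_i|$, the concatenation $a_0\concats a_k$ is an arbitrary string of length $t$ (giving $2^t$ choices), and the positions at which it is cut into the pieces $a_0,\dots,a_k$ correspond to a choice of $k\ge0$ together with nonnegative integers $x_0+\dots+x_k=t$ satisfying $x_i+x_{i+1}\ge1$ for all $i<k$ (the condition $a_i\concat a_{i+1}\ne\emptystring$) --- that is, a composition of $t$ into parts $\ge0$ in which no two consecutive parts are zero. Writing $u_t$ for the number of such compositions (summed over all $k$), we get $M_t=2^tu_t$. The key claim is $u_t=4\cdot3^{t-1}$ for $t\ge1$ (and $u_0=1$). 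To prove it I would let $q_t$ count the admissible compositions of $t$ whose first part is nonzero, and split a composition on its first part $x_0$: if $x_0\ge1$ then the remaining parts are either absent (the one-part composition $(t)$) or form an arbitrary admissible composition of $t-x_0$, so this case contributes $q_t=1+\sum_{s=0}^{t-1}u_s$; if $x_0=0$ then, for $t\ge1$, there must be a further part $x_1\ge1$ and the remaining parts form an admissible composition of $t$ with nonzero first part, contributing another $q_t$. Hence $u_t=2q_t=2+2\sum_{s=0}^{t-1}u_s$ for $t\ge1$; subtracting the instances for $t$ and $t-1$ gives $u_t=3u_{t-1}$ for $t\ge2$, and with $u_0=1$ and $u_1=4$ this yields $u_t=4\cdot3^{t-1}$ (equivalently, the generating function is $\sum_t u_t z^t=(1+z)/(1-3z)$).

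Consequently $M_t=2^t\cdot4\cdot3^{t-1}=8\cdot6^{t-1}$ for $t\ge1$, so $\ceil{\log M_t}=3+\ceil{(t-1)\log6}=\ceil{(1+\log3)(t-1)}+3$, and \Cref{lemm:labelssamesize} encodes every admissible list with a given $t\ge1$ by a distinct string of exactly this length; the unique list with $t=0$ (all strings empty) is encoded by any fixed string of length $1$, which is shorter than the codeword length for every $t\ge1$. It then remains to observe that $t\mapsto\ceil{(1+\log3)(t-1)}+3$ is strictly increasing --- consecutive values differ by at least $\floor{1+\log3}=2$ --- so the decoder recovers $t$ from the length and then decodes via \Cref{lemm:labelssamesize}. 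I expect the bookkeeping in the recurrence for $u_t$ to be the main obstacle: unlike the even/odd pairing in \Cref{lemm:encodenonemptypairs2}, the ``no two consecutive zeros'' constraint links every adjacent pair of parts, so one must be careful about one-part compositions, the small cases $t=0,1$, and the role of empty strings; once the closed form $u_t=4\cdot3^{t-1}$ is established the rest is routine.
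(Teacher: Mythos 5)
Your proof is correct, but it takes a genuinely different route from the paper. The paper proves \Cref{lemm:encodenonemptyconsecutives} by an explicit construction: it concatenates $\tilde a=a_0\concats a_k$ with an encoding (via \Cref{lemm:labelssamesize}) of a ternary string of length $t-1$ that records, for each bit of $\tilde a$ after the first, whether that bit starts a new piece and whether the preceding piece was empty, plus two indicator bits for the emptiness of $a_0$ and $a_k$; the total length is $\ceil{(t-1)\log 3}+2+t$ and the decoder splits the codeword using its length. You instead follow the enumerative strategy of \Cref{lemm:encodenonemptypairs2}: you count the admissible lists with a fixed $t$ as $2^t u_t$, where $u_t$ is the number of compositions of $t$ into nonnegative parts with no two adjacent zeros, derive $u_t=4\cdot 3^{t-1}$ (your recurrence $u_t=2+2\sum_{s<t}u_s$ and the resulting $u_t=3u_{t-1}$ check out, as do $u_0=1$, $u_1=4$, and the $t=0$ corner case), and then apply \Cref{lemm:labelssamesize}, recovering $t$ from the codeword length since consecutive lengths differ by at least $2$. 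Since $\ceil{\log(8\cdot 6^{t-1})}=\ceil{(1+\log 3)(t-1)}+3$, your encoding meets the stated bound exactly, and in fact shows it is information-theoretically tight for this coding style. What the comparison buys: your argument is more self-contained and pins down the exact count, but, like \Cref{lemm:encodenonemptypairs2}, it provides no efficient implementation of the bijection, whereas the paper's constructive proof is one of the encodings it explicitly claims can be implemented with linear-time encoding and constant-time decoding on a RAM --- a property used later when asserting the complexity of the binary-tree scheme in \Cref{theo:ncaupperbintrees}. So your proof establishes the lemma as stated, but substituting it for the paper's would weaken that implementability remark.
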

\begin{proof}
We encode $a$ by concatenating $\tilde a=a_0\concats a_k$ of length $t$ with a string $s$ of length $\ceil{(\log 3) t}$. To describe $s$, we first construct a string $\tilde s$ of length $t-1$ over the alphabet $\{\zero,\one,\s{2}\}$. The $i$'th bit $\tilde s_i$ of $\tilde s$ is defined according to the role of the $(i+1)$'th bit $x$ in $\tilde a$ as follows:
\[
\tilde s_i = \begin{cases} 
\zero, & \text{if $x$ is the first bit of a nonempty string $a_j$,} \\
          & \text{where $a_{j-1}$ is nonempty,} \\
\one,  & \text{if $x$ is the first bit of a nonempty string $a_j$,} \\
          & \text{where $a_{j-1}$ is empty,} \\
\s{2}, & \text{else.} 
\end{cases}
\]
The string $\tilde s$ represents a unique choice out of $3^{t-1}$ possibilities, and by \Cref{lemm:labelssamesize} we can represent this choice with a binary string $s$ of length exactly equal to $\ceil{\log 3^{t-1}}=\ceil{(t-1)\log 3}$. We concatenate this with a single indicator bit representing whether $a_0$ is empty or not, and another indicator bit representing whether $a_k$ is empty or not. Finally, we concatenate all this with $\tilde a$, giving a string of total length $\ceil{(t-1)\log 3}+2+t = \ceil{(1+\log 3)(t-1)}+3$.

Since the value $t$ is uniquely determined from the length of the encoded string, the decoder is able to split up the encoded string into $\tilde a$, $s$ and the two indicator bits. It can then convert $s$ to $\tilde s$ and use the entries in $\tilde s$ and the indicator bits to recreate $a$ from $\tilde a$. This proves the theorem.
\end{proof}

\subsection{An order on binary strings.} \label{sec:ordering}
Consider the total order $\preceq$ on binary strings defined by 
\[
s\concat \zero\concat t\prec s\prec s\concat \one\concat t'
\]
for all binary strings $s,t,t'$. Here we have written $s\prec t$ as short for $s\preceq t\wedge s\neq t$. This order naturally arises in many contexts and has been studied before; see, for example,~\cite{TZ01}.
All binary strings of length three or less are ordered by $\preceq$ as follows:
{\small
\[
\zero\zero\zero\prec \zero\zero\prec \zero\zero\one \prec \zero \prec \zero\one\zero \prec \zero\one \prec \zero\one\one \prec \emptystring
\prec \one\zero\zero \prec \one\zero \prec \one\zero\one \prec \one \prec \one\one\zero \prec \one\one\prec \one\one\one
\]
}%

A finite sequence $(a_i)$ of binary strings is \emph{$\prec$-ordered} if $a_i\prec a_j$ for $i<j$. 

\begin{lemma}\label{lemm:halvorsen}
Given a finite sequence $(w_i)$ of positive numbers with $w=\sum_iw_i$, there exists an $\prec$-ordered sequence $(a_i)$ with $|a_i|\leq \floor{\log w - \log w_i}$ for all $i$.
\end{lemma}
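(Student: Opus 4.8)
The plan is to transport the problem across the order‑isomorphism that makes $\prec$ transparent. Map each binary string $a$ to the number $\val(a)\in(0,1)$ obtained by appending a $\one$ to $a$ and reading the result as a binary fraction, so that $\val(\emptystring)=1/2$, $\val(\zero)=1/4$, $\val(\one)=3/4$, and so on. The first step is to verify two routine facts: (i) $\val$ is an order isomorphism from $(\{0,1\}^*,\prec)$ onto the dyadic rationals in $(0,1)$ with their usual order, its inverse stripping the trailing $\one$ from the (unique) terminating binary expansion $\zero\texttt{.}b_1\cdots b_m$ (the one with $b_m=\one$); and (ii) $|a|\le k$ precisely when $\val(a)$ is an integer multiple of $2^{-(k+1)}$ lying in $(0,1)$, since a string of length exactly $k$ corresponds to a rational $m/2^{k+1}$ with $m$ odd. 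Both follow directly from the definitions of $\prec$ and of binary expansions.

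Granting (i) and (ii), and abbreviating $L_i=\floor{\log w-\log w_i}$ (note $L_i\ge 0$ since $w_i\le w$), it suffices to exhibit dyadic rationals $x_1<x_2<\cdots$ in $(0,1)$ with each $x_i$ an integer multiple of $2^{-(L_i+1)}$; then $a_i:=\val^{-1}(x_i)$ satisfies $|a_i|\le L_i$ and $a_1\prec a_2\prec\cdots$. To produce the $x_i$, normalise by $w$: set $q_0=0$ and $q_i=(w_1+\cdots+w_i)/w$, so $0=q_0<q_1<\cdots$ terminates at $1$ and the $i$‑th open gap $G_i=(q_{i-1},q_i)$ has length $w_i/w$. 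Since $L_i=\floor{\log(w/w_i)}$ we have $L_i+1>\log(w/w_i)$, hence $w_i/w>2^{-(L_i+1)}$ strictly; thus $G_i$ is strictly longer than the spacing $2^{-(L_i+1)}$ between consecutive multiples of $2^{-(L_i+1)}$, so it contains at least one such multiple. I would let $x_i$ be the largest multiple of $2^{-(L_i+1)}$ that is below $q_i$, and check that $x_i>q_i-2^{-(L_i+1)}>q_{i-1}$, so $x_i\in G_i\subseteq(0,1)$.

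Finally, the ordering is immediate: for $i<j$ one has $q_i\le q_{j-1}$, so $x_i<q_i\le q_{j-1}<x_j$; hence $x_1<x_2<\cdots$, and applying the order isomorphism gives $a_1\prec a_2\prec\cdots$, completing the argument.

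I do not expect a genuine obstacle. The one point that needs care is the choice of resolution: a naive attempt would try to place $x_i$ on the grid $2^{-L_i}\ZZ$, but $G_i$ can be shorter than $2^{-L_i}$, so no point of that grid need lie in $G_i$; passing to the finer grid $2^{-(L_i+1)}\ZZ$ is exactly what works, because the floor in the definition of $L_i$ forces $G_i$ to be \emph{strictly} longer than $2^{-(L_i+1)}$, and an open interval strictly longer than $\delta$ always meets $\delta\ZZ$ (whereas an open interval of length exactly $\delta$ beginning at a point of $\delta\ZZ$ does not). Establishing facts (i) and (ii) is bookkeeping rather than difficulty.
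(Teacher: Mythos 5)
Your proposal is correct, but it proves the lemma by a genuinely different route than the paper. The paper argues by induction on the length of the sequence: choose the smallest $k$ with $\sum_{i\le k}w_i>w/2$, set $a_k=\emptystring$, recurse on the two remaining subsequences (each of total weight at most $w/2$, so the bound $\floor{\log w-\log w_i}$ drops by exactly one), and prepend $\zero$ or $\one$ to the strings returned by the recursion. You instead make $\prec$ concrete through the order isomorphism $a\mapsto\val(a)$ onto the dyadic rationals in $(0,1)$ and then run a Kraft/Shannon--Fano--Elias-style selection: the cumulative-weight interval $(q_{i-1},q_i)$ has length $w_i/w>2^{-(L_i+1)}$ (strict, because $\floor{x}+1>x$), hence meets the grid $2^{-(L_i+1)}\ZZ$, and taking the largest grid point strictly below $q_i$ gives strictly increasing dyadic values, i.e.\ a $\prec$-ordered sequence of strings of the required lengths. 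One wording slip, not a gap: when $q_i$ is itself a grid point, the largest grid point strictly below $q_i$ equals $q_i-2^{-(L_i+1)}$, so the chain should read $x_i\ge q_i-2^{-(L_i+1)}>q_{i-1}$, which still places $x_i$ in $(q_{i-1},q_i)$. As for what each approach buys: the paper's induction is shorter and self-contained, needing no auxiliary facts about $\prec$; your construction is explicit and non-recursive, makes the link to alphabetic/arithmetic coding transparent, and directly yields the linear-time computability that the paper only obtains afterwards by a separate greedy argument.
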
	
\begin{proof}
The proof is by induction on the number of elements in the sequence $(w_i)$. If there is only one element, $w_1$, then we can set $a_1=\emptystring$, which satisfies $|a_1|=0 =\floor{\log w_1-\log w_1}$. So suppose that there is more than one element in the sequence and that the theorem holds for shorter sequences. Let $k$ be the smallest index such that $\sum_{i\leq k} w_i>w/2$, and set $a_k=\emptystring$.  Then $a_k$ clearly satisfies the condition. The subsequences $(w_i)_{i<k}$ and $(w_i)_{i>k}$ are shorter and satisfy $\sum_{i<k}w_i\leq w/2$ and $\sum_{i>k}w_i\leq w/2$, so by induction there exist $\prec$-ordered sequences $(b_i)_{i<k}$ and $(b_i)_{i>k}$ with $|b_i|\leq \floor{\log(w/2)-\log w_i} = \floor{\log w-\log w_i} -1$ for all $i\neq k$. Now, define $a_i$ for $i<k$ by $a_i=0\concat b_i$ and for $i>k$ by $a_i=1\concat b_i$. Then $(a_i)$ is a $\prec$-ordered sequence with $|a_i|\leq\floor{\log w-\log w_i}$ for all $i$.
\end{proof}
A linear time implementation of the previous lemma can be achieved as follows. First compute the numbers $t_i=\floor{\log w-\log w_i}$ in linear time. Now set $a_1=\zero^{t_1}$ to be the minimum (with respect to the order $\preceq$) binary string of length at most $t_1$. At the $i$'th step, set  $a_i$ to be the minimum binary string of length at most $t_i$ with $a_{i-1}\prec a_i$. If this process successfully terminates, then the sequence $(a_i)$ has the desired property. On the other hand, the process must terminate, because the above lemma says that there \emph{exists} an assignment of the $a_i$'s, and our algorithm conservatively chooses each $a_i$ so that the set of possible choices left for $a_{i+1}$ is maximal at every step. A similar argument shows that the following lemma can be implemented in linear time.

\begin{lemma}\label{lemm:halvorsennonempty}
Given a finite sequence $(w_i)$ of positive numbers with $w=\sum_iw_i$, there exist an $\prec$-ordered sequence $(a_i)$ of nonempty strings and a $k$  such that $|a_i|\leq \floor{\log (w+w_k) - \log w_i}$ for all $i$.
\end{lemma}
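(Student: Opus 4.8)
The plan is to mimic the proof of \Cref{lemm:halvorsen}, but instead of assigning the empty string to the ``median'' element and recursing on the two sides, to absorb the median element into one of the two sides and to prepend a bit to every string. A $\prec$-ordered sequence of \emph{nonempty} strings is naturally obtained by splitting the index set into a prefix block $P=\{1,\dots,j\}$ and a suffix block $Q=\{j+1,\dots,n\}$, prepending $\zero$ to the strings used on $P$ and $\one$ to those used on $Q$; this is $\prec$-ordered because $\zero\concat s\prec\one\concat t$ for all $s,t$ and because prepending a fixed bit preserves $\prec$ (the same facts already used in \Cref{lemm:halvorsen}). If $(b_i)_{i\in P}$ is the $\prec$-ordered sequence supplied by \Cref{lemm:halvorsen} for the weights $(w_i)_{i\in P}$ with total $w_P=\sum_{i\in P}w_i$, then $a_i=\zero\concat b_i$ is nonempty and $|a_i|=1+|b_i|\le 1+\floor{\log w_P-\log w_i}=\floor{\log(2w_P)-\log w_i}$, and symmetrically $a_i=\one\concat b_i$ on $Q$. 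Hence it suffices to choose the split point $j$ and the index $k$ so that $2w_P\le w+w_k$ and $2w_Q\le w+w_k$.

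For the choice of $k$, take $k$ to be the smallest index with $2\sum_{i\le k}w_i\ge w$, and put $A=\sum_{i<k}w_i$ and $B=\sum_{i>k}w_i$, so that $w=A+w_k+B$ while $B\le A+w_k$ (by the choice of $k$) and $A<B+w_k$ (by the minimality of $k$). I would then keep two candidate splits in play: the first, $P=\{1,\dots,k\}$ (so $w_P=A+w_k$, $w_Q=B$), and the second, $P=\{1,\dots,k-1\}$ (so $w_P=A$, $w_Q=w_k+B$). A short computation, using $w+w_k=A+B+2w_k$, shows: for the first split the $Q$-side inequality $2w_Q\le w+w_k$ is automatic (it amounts to $B\le A+2w_k$, implied by $B\le A+w_k$), and the $P$-side inequality is exactly $A\le B$; symmetrically, for the second split the $P$-side is automatic (implied by $A<B+w_k$) and the $Q$-side is exactly $B\le A$. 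Since $A\le B$ or $B\le A$, one of the two splits satisfies $\max(2w_P,2w_Q)\le w+w_k$, and the witnessing index $k$ is the same in both.

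Assembling the pieces: fix $k$ as above, pick whichever of the two splits works, apply \Cref{lemm:halvorsen} to $(w_i)_{i\in P}$ and to $(w_i)_{i\in Q}$ (if a block is empty there is nothing to do for it; otherwise it is a finite sequence of positive numbers to which \Cref{lemm:halvorsen} applies), prepend $\zero$ and $\one$ respectively, and verify $\prec$-orderedness and the bound $|a_i|\le\floor{\log(2w_P)-\log w_i}\le\floor{\log(w+w_k)-\log w_i}$ on $P$ and the analogue on $Q$, exactly as in the first paragraph; the boundary cases $k=1$, $k=n$, and $A=B$ are all absorbed by the convention ``use the first split iff $A\le B$''. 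I expect the only real obstacle to be the second paragraph: getting the slack exactly right so that the imbalance $|w_P-w_Q|$ between the two blocks is bounded by the \emph{single} weight $w_k$ rather than by $2w_k$ (which the naive ``absorb the median into one side'' move would give), which is precisely why one must carry both candidate splits and dispatch on the sign of $A-B$.
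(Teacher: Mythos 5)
Your proof is correct, but it takes a different route from the paper's. The paper proves this lemma by a two-line reduction: insert a duplicate of $w_k$ next to $w_k$ (so the weights now sum to $w+w_k$), apply the construction of \Cref{lemm:halvorsen} to the augmented sequence, observe that the new pivot -- the unique element receiving $\emptystring$ -- must be one of the two copies of $w_k$ (the prefix sum up to $w_{k-1}$ is at most $w/2<(w+w_k)/2$, while the prefix sum through the second copy exceeds $(w+w_k)/2$), and discard that copy; what remains is $\prec$-ordered, nonempty, and within the bound. Note that this argument peeks into the \emph{proof} of \Cref{lemm:halvorsen} (it needs that exactly the top-level pivot gets the empty string), whereas you use \Cref{lemm:halvorsen} purely as a black box on the two sub-sequences and redo the top level of its recursion by hand: choose $k$, keep the two candidate splits $P=\{1,\dots,k\}$ and $P=\{1,\dots,k-1\}$, and dispatch on the sign of $A-B$ so that $\max(2w_P,2w_Q)\le w+w_k$; prepending $\zero$ on $P$ and $\one$ on $Q$ then gives nonemptiness for free and the bound $|a_i|\le 1+\floor{\log w_P-\log w_i}=\floor{\log(2w_P)-\log w_i}\le\floor{\log(w+w_k)-\log w_i}$, with the $Q$ side symmetric. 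Your inequalities ($B\le A+w_k$ from the choice of $k$, $A<B+w_k$ from its minimality, and the verification that split 1 needs exactly $A\le B$ and split 2 exactly $B\le A$) all check out, as do the degenerate cases of an empty block. The trade-off: the paper's duplication trick is shorter and slicker; yours is longer but more self-contained, uses only the \emph{statement} of \Cref{lemm:halvorsen}, and makes explicit where the extra $w_k$ of budget is spent.
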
	
\begin{proof}
Let $k$ be the smallest index such that $\sum_{i\leq k} w_i>w/2$ and add an extra copy of $w_k$ next to $w_k$ in the sequence of weights. The total sequence of weights will now sum to $w+w_k$, and if we apply \Cref{lemm:halvorsen} to this sequence, exactly one of the two copies of $w_k$ will be assigned the empty string. Discard this string, and what is left is a $\prec$-ordered sequence $(a_i)$ with $|a_i|\leq \floor{\log(w+w_k)-\log w_i}$ for all $i$ as desired.
\end{proof}

\subsection{Heavy-light decomposition.}\label{sec:heavylightdecomposition}
We next  describe the \emph{heavy-light decomposition} of  Harel and Tarjan~\cite{HT84}. 
Let $T$ be a rooted tree. The nodes of $T$ are classified as either \emph{heavy} or \emph{light} as follows. The root $r$ of $T$ is light. For each internal node $v$, pick one child node $w$ where $\size(w)$ is maximal among the children of $v$ and classify it as heavy; classify the other children of $v$ as light. We denote the unique heavy child of $v$ by $\hchild(v)$ and the set of light children by $\lchildren(v)$. The \emph{light size} of a node $v$ is the number $\lsize(v)=1+\sum_{{w\in\lchildren(v)}}\size(w)$, which is equal to $\size(v)-\size(\hchild(v))$ when $v$ is internal.
The \emph{apex} of $v$, denoted $\apex(v)$, is the nearest light ancestor of $v$.
By removing the edges between light nodes and their parents, $T$ is divided into a collection of \emph{heavy paths}. The set of nodes on the same heavy path as  $v$ is denoted $\hpath(v)$. The top node of $\hpath(v)$ is the light node $\apex(v)$. 

For a node $v$, consider the sequence $u_0,\dots ,u_k$ of light nodes encountered on the path from the root $r=u_0$ to $v$. The number $k$ is the \emph{light depth} of $v$, denoted $\ldepth(v)$. The light depth of $T$, $\ldepth(T)$ is the maximum light depth among the nodes in $T$. 
Note that $\ldepth(v)\leq \ldepth(T)\leq \log n$; see~\cite{HT84}.

\subsection{One NCA labeling scheme.} \label{sec:onencascheme}
We now describe the labeling scheme that will be used for various  families of trees, although with different encodings for each family.
Given a rooted tree $T$, we begin by assigning to each node $v$ a \emph{heavy label}, $\hlabel(v)$, and, when $v$ is light and not equal to the root, a \emph{light label}, $\llabel(v)$, as described in \Cref{lemm:heavylabels,lemm:lightlabels} below.
\begin{lemma} \label{lemm:heavylabels}
There exist binary strings $\hlabel(v)$ for all nodes $v$ in $T$ so that the following hold for all nodes $v,w$ belonging to the same heavy path:
\begin{gather}
\begin{multlined}[t] \depth(v)<\depth(w) \implies \\ \hlabel(v) \prec \hlabel(w)  \label{eq:hlabel} \end{multlined} \\
|\hlabel(v)| \leq \floor{\log \size(\apex(v)) -\log \lsize(v)} \label{eq:hlabelsize}
\end{gather}
\end{lemma}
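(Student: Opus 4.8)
The plan is to assign heavy labels along a single heavy path $P = (v_0, v_1, \dots, v_\ell)$, ordered by increasing depth, by reducing to the ordering lemma. Fix a heavy path $P$ with apex $a = v_0$; note every node on $P$ has the same apex $a$, so the target bound reads $|\hlabel(v_i)| \le \floor{\log \size(a) - \log \lsize(v_i)}$. The idea is to invoke \Cref{lemm:halvorsen} with the weight sequence $w_i = \lsize(v_i)$ for $i = 0, \dots, \ell$. First I would check the weights sum to at most $\size(a)$: since $a = \apex(v)$ is the top of the heavy path, the subtrees hanging off $P$ are exactly the light subtrees rooted at the nodes of $P$, and $\size(a) = \sum_{i=0}^{\ell} \lsize(v_i)$ (each node of $T_a$ lies in exactly one $T_w$ for a light child $w$ of some $v_i$, or is one of the $v_i$ themselves — this is precisely the telescoping $\lsize(v_i) = \size(v_i) - \size(v_{i+1})$ along the path, with $\size(v_{\ell+1})$ interpreted as $0$). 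So $w := \sum_i w_i = \size(a)$.

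Next I would apply \Cref{lemm:halvorsen} to $(w_i)_{i=0}^\ell$ to obtain a $\prec$-ordered sequence $(b_i)$ with $|b_i| \le \floor{\log w - \log w_i} = \floor{\log \size(a) - \log \lsize(v_i)}$, and set $\hlabel(v_i) := b_i$. Condition \eqref{eq:hlabelsize} is then immediate. For condition \eqref{eq:hlabel}: if $v, w \in P$ with $\depth(v) < \depth(w)$, then $v = v_i$, $w = v_j$ with $i < j$ (depth is strictly increasing along a heavy path), so $\hlabel(v) = b_i \prec b_j = \hlabel(w)$ by the $\prec$-orderedness of $(b_i)$. Doing this independently for every heavy path of $T$ produces heavy labels for all nodes (every node lies on exactly one heavy path), and the stated properties only concern pairs on a common heavy path, so there is no interaction between paths to worry about.

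The one genuine obstacle is the identity $\size(\apex(v)) = \sum_{i} \lsize(v_i)$ over the heavy path through $v$, i.e.\ verifying that the light subtrees rooted along a heavy path together with the path nodes partition $T_{\apex(v)}$. This follows from the definition of heavy-light decomposition: a heavy path starting at the light node $a$ descends through heavy children until it reaches a leaf, so $T_a$ is partitioned by $\{v_0,\dots,v_\ell\}$ together with the subtrees $T_w$ for $w$ ranging over $\bigcup_i \lchildren(v_i)$; grouping by $i$ and using the definition $\lsize(v_i) = 1 + \sum_{w \in \lchildren(v_i)} \size(w)$ gives the sum. Once this is in hand, the rest is a direct citation of \Cref{lemm:halvorsen}. (If one wants $w > 0$ as required by \Cref{lemm:halvorsen}, note $\lsize(v_i) \ge 1$ always, so all weights are positive and $w \ge 1$.)
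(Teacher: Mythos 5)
Your proposal is correct and follows exactly the paper's proof: apply \Cref{lemm:halvorsen} to each heavy path separately with weights $(\lsize(v))_{v\in H}$ ordered by increasing depth, using the telescoping identity $\sum_i \lsize(v_i)=\size(\apex(v))$ so that the bound from \Cref{lemm:halvorsen} becomes~\eqref{eq:hlabelsize} while $\prec$-orderedness gives~\eqref{eq:hlabel}. You merely spell out the details that the paper leaves implicit, so there is nothing to correct.
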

\begin{proof}
Consider  each heavy path $H$ separately and use the sequence $(\lsize(v))_{v\in H}$, ordered ascendingly by $\depth(v)$, as input to \Cref{lemm:halvorsen}.
\end{proof}

\begin{lemma} \label{lemm:lightlabels}
There exist binary strings $\llabel(v)$ for all light nodes $v\neq r$ in $T$ so that the following hold for all light siblings $v,w$:
\begin{gather}
v\neq w\implies \llabel(v) \neq \llabel(w) \label{eq:llabel} \\
|\llabel(v)| \leq  \floor{\log \lsize(\parent(v)) - \log \size(v)} \label{eq:llabelsize} 
\end{gather}
\end{lemma}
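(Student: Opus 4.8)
The plan is to mimic the proof of \Cref{lemm:heavylabels}: for each internal node the light children share a common parent, so I would apply one of the ``$\prec$-ordered sequence'' lemmas separately to each such sibling group. Concretely, fix an internal node $u$ and consider its light children $w_1,\dots,w_\ell$ (the case $\ell=0$ is vacuous). I would feed the weight sequence $(\size(w_j))_{j}$ into a lemma that produces a $\prec$-ordered sequence $(a_j)$ with small lengths, and then set $\llabel(w_j)=a_j$. Distinctness \eqref{eq:llabel} for siblings is immediate because a $\prec$-ordered sequence consists of pairwise distinct strings, and this is the only pair of constraints we must satisfy since \eqref{eq:llabel} and \eqref{eq:llabelsize} only quantify over light \emph{siblings} — labels of light nodes in different sibling groups need not interact.

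The size bound \eqref{eq:llabelsize} is where the choice of lemma matters. With $w=\sum_j \size(w_j)$, \Cref{lemm:halvorsen} would give $|\llabel(w_j)|\le\floor{\log w-\log\size(w_j)}$, but $w=\lsize(u)-1<\lsize(u)=\lsize(\parent(w_j))$, which is slightly \emph{better} than what \eqref{eq:llabelsize} demands. However, I expect the intended statement to also guarantee that the light labels are \emph{nonempty} (this will be needed later when light and heavy sub-labels are concatenated and separated), so the right tool is \Cref{lemm:halvorsennonempty} rather than \Cref{lemm:halvorsen}. Applying \Cref{lemm:halvorsennonempty} to $(\size(w_j))_j$ yields a $\prec$-ordered sequence of \emph{nonempty} strings and an index $k$ with $|a_j|\le\floor{\log(w+\size(w_k))-\log\size(w_j)}$; since $w+\size(w_k)\le 2w\le 2(\lsize(u)-1)<2\lsize(u)$ is not quite $\le\lsize(\parent(w_j))$, I would instead note $w=\lsize(u)-1$ and $\size(w_k)\le w$, so it suffices to check $w+\size(w_k)\le\lsize(u)$ is false in general — meaning the bound as literally written in \eqref{eq:llabelsize} is the one obtained from \Cref{lemm:halvorsen} with the empty string allowed, and nonemptiness would cost an extra bit that the statement does not charge for.

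Given that tension, the cleanest reading is that \eqref{eq:llabelsize} is meant to be proved exactly as \eqref{eq:hlabelsize} was: invoke \Cref{lemm:halvorsen} on $(\size(v))_{v\in\lchildren(\parent)}$ for each internal node, using $\sum_{v\in\lchildren(u)}\size(v)=\lsize(u)-1\le\lsize(u)$ to absorb the discrepancy inside the floor (indeed $\floor{\log(\lsize(u)-1)-\log\size(v)}\le\floor{\log\lsize(u)-\log\size(v)}$), and take the resulting $\prec$-ordered — hence pairwise distinct — strings as the light labels. So the proof is one line per property: \eqref{eq:llabel} from $\prec$-orderedness, \eqref{eq:llabelsize} from \Cref{lemm:halvorsen} together with the observation $\sum_{v\in\lchildren(u)}\size(v)\le\lsize(\parent(v))$. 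The main (minor) obstacle is simply being careful about whether the labels are required to be nonempty and, if so, whether to cite \Cref{lemm:halvorsennonempty} and accept the harmless off-by-one; I would resolve this by stating the proof with \Cref{lemm:halvorsen} exactly as in \Cref{lemm:heavylabels}, since that is what the displayed bound supports.
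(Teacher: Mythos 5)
Your proposal is correct and matches the paper's proof: apply \Cref{lemm:halvorsen} to the weights $(\size(v))_{v\in L}$ for each set $L$ of light siblings, getting distinctness from $\prec$-orderedness and the size bound from $\sum_{v\in L}\size(v)\leq\lsize(\parent(v))$. Your digression about nonemptiness is unnecessary here, since the statement does not require nonempty light labels (that issue is deferred to the later modification lemmas), and you correctly settle on \Cref{lemm:halvorsen} in the end.
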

\begin{proof}
Consider each set $L$ of light siblings separately and use the sequence $(\size(v))_{v\in L}$, not caring about order, as input to  \Cref{lemm:halvorsen}.
\end{proof}
In many cases we are not going to use the constructions in \Cref{lemm:heavylabels,lemm:lightlabels} directly, but will instead use the following two modifications:

\begin{lemma} \label{lemm:lightlabelsnonempty}
It is possible to modify the constructions in \Cref{lemm:heavylabels,lemm:lightlabels} so that, for all nodes $u,v$ where $v$ is a light child of $u$,
\begin{equation} \label{eq:llabelnonempty}
\hlabel(u) =\emptystring \implies \llabel(v)\neq \emptystring .
\end{equation}
The modification still satisfies~\eqref{eq:hlabel},~\eqref{eq:hlabelsize},~\eqref{eq:llabel} and~\eqref{eq:llabelsize} except that when $\hlabel(u)$ is empty, \eqref{eq:llabelsize} is replaced by
\begin{equation} \label{eq:hllabelsize}
|\hlabel(u)| + |\llabel(v)| \leq  
\floor{\log \size(\apex(u)) - \log \size(v)} 
\end{equation}
\end{lemma}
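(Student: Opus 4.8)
The plan is to leave the heavy-label construction of \Cref{lemm:heavylabels} completely untouched and to modify only the light-label construction of \Cref{lemm:lightlabels}, and only at those nodes $u$ with $\hlabel(u)=\emptystring$. Concretely: for every node $u$ with $\hlabel(u)=\emptystring$ that has at least one light child (so $u$ is internal and has a heavy child), build the light labels of the set $L=\lchildren(u)$ by feeding the weights $(\size(v))_{v\in L}$ into \Cref{lemm:halvorsennonempty} instead of \Cref{lemm:halvorsen}; for every other set of light siblings the original construction of \Cref{lemm:lightlabels} is kept. Inspecting the proof of \Cref{lemm:halvorsen} shows that it assigns the empty string to exactly one element of its input, so ``$\hlabel(u)=\emptystring$'' is a sparse condition, and since each set of light siblings has a unique parent these choices do not interfere; the modification is thus well defined.

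With this modification, \eqref{eq:hlabel} and \eqref{eq:hlabelsize} hold because the heavy labels are unchanged, and \eqref{eq:llabelsize} holds for the unmodified sets for the same reason as in \Cref{lemm:lightlabels}. For a modified set $L=\lchildren(u)$, \Cref{lemm:halvorsennonempty} outputs a $\prec$-ordered sequence of \emph{nonempty} strings, which immediately gives \eqref{eq:llabelnonempty}; being $\prec$-ordered, the strings are strictly increasing in a total order and hence pairwise distinct, so \eqref{eq:llabel} still holds.

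It remains to verify \eqref{eq:hllabelsize} for a modified set. Writing $W=\sum_{v\in L}\size(v)=\lsize(u)-1$, \Cref{lemm:halvorsennonempty} gives light labels with $|\llabel(v)|\le\floor{\log(W+\size(v^*))-\log\size(v)}$ for some fixed $v^*\in L$. Since $\hchild(u)$ has maximum size among all children of $u$, we have $\size(v^*)\le\size(\hchild(u))$, and therefore
\[
W+\size(v^*)\ \le\ (\lsize(u)-1)+\size(\hchild(u))\ =\ \size(u)-1\ <\ \size(u)\ \le\ \size(\apex(u)),
\]
using the identity $\size(u)=\lsize(u)+\size(\hchild(u))$ and the fact that $\apex(u)$ equals $u$ or an ancestor of $u$. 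Hence $|\llabel(v)|\le\floor{\log\size(\apex(u))-\log\size(v)}$, and since $|\hlabel(u)|=0$ this is exactly \eqref{eq:hllabelsize}.

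This lemma is essentially bookkeeping on top of \Cref{lemm:halvorsen,lemm:halvorsennonempty}, so I do not expect a genuine obstacle. The one point that needs to be stated cleanly is that \Cref{lemm:halvorsen} produces the empty string for only one input element, so that the modification is applied at pairwise disjoint sibling sets. The only mild subtlety is the displayed inequality: one must combine the definition of the heavy child with $\size(u)=\lsize(u)+\size(\hchild(u))$, rather than trying to bound $\lsize(u)$ against $\size(\apex(u))$ directly, which would be far too lossy.
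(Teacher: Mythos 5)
Your proposal is correct and matches the paper's own proof: the same modification (using \Cref{lemm:halvorsennonempty} in place of \Cref{lemm:halvorsen} for the light children of nodes $u$ with $\hlabel(u)=\emptystring$) and the same two key facts, namely $\size(\hchild(u))\geq w_k$ and $\lsize(u)+\size(\hchild(u))=\size(u)\leq\size(\apex(u))$, yielding \eqref{eq:hllabelsize}. Your extra remarks about well-definedness and the preservation of \eqref{eq:llabel} are harmless additions to what the paper states more tersely.
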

\begin{proof}
First observe that without modifying the construction in \Cref{lemm:heavylabels,lemm:lightlabels} we can combine~\eqref{eq:hlabelsize} with~\eqref{eq:llabelsize} to obtain~\eqref{eq:hllabelsize}.
We now describe the modification: the construction works exactly as in the two lemmas except that in cases where $\hlabel(u)$ is empty, we use \Cref{lemm:halvorsennonempty} in place of \Cref{lemm:halvorsen} in the construction of light labels in \Cref{lemm:lightlabels}. This clearly makes~\eqref{eq:llabelnonempty} true, so it remains to prove~\eqref{eq:hllabelsize}.

So let $u$ and $v$ be as above. By construction of the heavy-light decomposition,  $\size(\hchild(u))$ is larger than or equal to the size of any of the light children of $u$, and hence larger than the size that corresponds to the weight $w_k$ in \Cref{lemm:halvorsennonempty}. Further,  $\lsize(u)+\size(\hchild(u))=\size(u)\leq \size(\apex(u))$. Using these two facts together, \Cref{lemm:halvorsennonempty} now yields
\begin{equation*}
|\llabel(v)| \leq \floor{\log \size(\apex(u)) - \log \size(v)}.
\end{equation*}
Since $|\hlabel(u)|=0$, we have therefore obtained~\eqref{eq:hllabelsize}.
\end{proof}

\begin{lemma} \label{lemm:heavylabelsnonempty}
It is possible to modify the constructions in \Cref{lemm:heavylabels,lemm:lightlabels} so that, for all nodes $u,v,w$ where $v$ is a light child of $u$ and $w$ is a descendant of $v$ on the same heavy path as $v$,
\begin{equation} \label{eq:hlabelnonempty}
\hlabel(u) =\emptystring \text{ and } \llabel(v) =\emptystring \implies \hlabel(w)\neq \emptystring .
\end{equation}
The modification still satisfies~\eqref{eq:hlabel},~\eqref{eq:hlabelsize},~\eqref{eq:llabel} and~\eqref{eq:llabelsize} except that when $\hlabel(u)$ and $\llabel(v)$ are both empty,~\eqref{eq:hlabelsize} is replaced by
\begin{equation} \label{eq:hlhlabelsize}
|\hlabel(u)| + |\llabel(v)| + |\hlabel(w)|\leq 
\floor{\log \size(\apex(u)) - \log \lsize(w)} 
\end{equation}
\end{lemma}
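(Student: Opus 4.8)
The plan is to mimic the modification from the proof of \Cref{lemm:lightlabelsnonempty}, but perturbing the \emph{heavy} labels rather than the light ones. First I would build all light labels exactly as in \Cref{lemm:lightlabels}; these are never touched, so \eqref{eq:llabel} and \eqref{eq:llabelsize} hold automatically, and for each light node $v$ one just reads off whether $\llabel(v)=\emptystring$. Then I would build the heavy labels one heavy path at a time, processing the heavy paths in order of increasing depth of the path's top node. For a heavy path $H$ with top node $v$: if $v=r$, or $\hlabel(\parent(v))\neq\emptystring$, or $\llabel(v)\neq\emptystring$, I construct the heavy labels on $H$ as in \Cref{lemm:heavylabels}, using $(\lsize(x))_{x\in H}$ ordered by depth as input to \Cref{lemm:halvorsen}; otherwise --- i.e.\ when $u:=\parent(v)$ satisfies $\hlabel(u)=\emptystring$ and $\llabel(v)=\emptystring$ --- I use the same sequence as input to \Cref{lemm:halvorsennonempty} instead. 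Processing top-down is what makes this well defined: since $v$ is light, $\parent(v)$ lies on a strictly higher heavy path, so $\hlabel(\parent(v))$ has already been finalized by the time $H$ is reached, and relabeling $H$ never retroactively changes a decision already made.

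Most of the claimed properties are then immediate. Both versions of \Cref{lemm:halvorsen} return $\prec$-ordered sequences and the input is always ordered by depth, so \eqref{eq:hlabel} holds on every heavy path. On every heavy path built with the unmodified \Cref{lemm:halvorsen}, the size bound \eqref{eq:hlabelsize} follows exactly as in \Cref{lemm:heavylabels}, using the telescoping identity $\sum_{x\in H}\lsize(x)=\size(v)=\size(\apex(x))$, valid for every $x\in H$. On a heavy path $H$ built with \Cref{lemm:halvorsennonempty} every heavy label is nonempty; and every node $w$ on such an $H$ has $\apex(w)=v$ with $\parent(\apex(w))=u$ satisfying $\hlabel(u)=\llabel(v)=\emptystring$, which is exactly the content of \eqref{eq:hlabelnonempty}. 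So the only thing needing real work is \eqref{eq:hlhlabelsize} on the modified paths.

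There, \Cref{lemm:halvorsennonempty} yields $|\hlabel(w)|\leq\floor{\log(\size(v)+\lsize(x_k))-\log\lsize(w)}$, where $x_k\in H$ is the node at which the running sum of light sizes (in depth order) first exceeds $\size(v)/2$. Since $\lsize(x_k)$ is one of the summands of $\sum_{x\in H}\lsize(x)=\size(v)$, we get $\size(v)+\lsize(x_k)\leq 2\size(v)$, and I would then absorb this factor of two by passing from $v$ to $\apex(u)$: the heavy-light decomposition gives $\size(v)\leq\size(\hchild(u))$ since the heavy child has maximal size among the children of $u$, and $\size(v)<\lsize(u)$, so $2\size(v)\leq\lsize(u)+\size(\hchild(u))=\size(u)\leq\size(\apex(u))$. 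Hence $|\hlabel(w)|\leq\floor{\log\size(\apex(u))-\log\lsize(w)}$, and since $|\hlabel(u)|=|\llabel(v)|=0$ this is precisely \eqref{eq:hlhlabelsize}. I expect this short chain of size inequalities --- getting the factor of two to be swallowed by the move from $v$ to $\apex(u)$ --- to be the only substantive step; everything else is bookkeeping against the statements of the two versions of \Cref{lemm:halvorsen}.
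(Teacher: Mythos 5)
Your proposal is correct and follows essentially the same route as the paper: replace \Cref{lemm:halvorsen} by \Cref{lemm:halvorsennonempty} on exactly those heavy paths whose apex $v$ has $\llabel(v)=\emptystring$ and $\hlabel(\parent(v))=\emptystring$, then absorb the extra weight $w_k\leq\size(v)$ via $2\size(v)\leq\lsize(u)+\size(\hchild(u))=\size(u)\leq\size(\apex(u))$. Your added remark on top-down processing of heavy paths just makes explicit a well-definedness point the paper leaves implicit.
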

\begin{proof}
The proof is similar to that of the previous lemma. First observe that without modifying the construction in \Cref{lemm:heavylabels,lemm:lightlabels} we can combine~\eqref{eq:hlabelsize},~\eqref{eq:llabelsize} and~\eqref{eq:hlabelsize} again to obtain~\eqref{eq:hlhlabelsize}.
We now describe the modification: the construction works exactly as in the two lemmas except that in cases where $\hlabel(u)$ and $\llabel(v)$ are both empty, we use \Cref{lemm:halvorsennonempty} in place of \Cref{lemm:halvorsen} in the construction of heavy labels in \Cref{lemm:heavylabels}. This clearly makes~\eqref{eq:hlabelnonempty} true, so it remains to prove~\eqref{eq:hlhlabelsize}.

So let $u$, $v$ and $w$ be as above. Note that $\size(v)$ is larger than or equal to the light size of any of the nodes on the heavy path with $v$ as apex, and hence larger than the light size that corresponds to the weight $w_k$ in \Cref{lemm:halvorsennonempty}. Further,  $2\size(v)\leq \lsize(u)+\size(\hchild(u))=\size(u))\leq \size(\apex(u))$. Using these two facts together, \Cref{lemm:halvorsennonempty} now yields
\begin{equation*}
|\hlabel(w)| \leq \floor{\log \size(\apex(u)) - \log \lsize(w)}.
\end{equation*}
Since $|\hlabel(u)|=|\llabel(v)|=0$, we have therefore obtained~\eqref{eq:hlhlabelsize}.
\end{proof}

We next assign a new set of labels for the nodes of $T$. Given a node $v$ with $\ldepth(v)=k$, consider the sequence $u_0,v_0,\dots ,u_k,v_k$ of nodes from the root $r=u_0$ to $v=v_k$, where $u_i=\apex(v_i)$ is light for $i=0,\dots ,k$ and $v_{i-1}=\parent(u_i)$ for $i=1,\dots ,k$. Let $l(v)=(h_0,l_1,h_1,\dots ,l_k,h_k)$, where $l_i=\llabel(u_i)$ and $h_i=\hlabel(v_i)$.
\Cref{fig:ncalabels} shows an example of a tree with the labels $l(v)$. 
Note that we have used \Cref{lemm:heavylabels,lemm:lightlabels} for the construction of labels in this figure and not any of the modifications in \Cref{lemm:lightlabelsnonempty,lemm:heavylabelsnonempty}.

\begin{figure*}
\centering
\begin{tikzpicture}
\Tree [
.{\footnotesize{\underline{$\emptystring$}}}		
	\edge[dashed]; [.{\footnotesize{\underline{$\emptystring$}$\emptystring$\underline{\zero}}} [.{\footnotesize{\underline{$\emptystring$}$\emptystring$\underline{$\emptystring$}}} \edge[dashed]; [.{\footnotesize{\underline{$\emptystring$}$\emptystring$\underline{$\emptystring$}\one\underline{\zero}}} {\footnotesize{\underline{$\emptystring$}$\emptystring$\underline{$\emptystring$}\one\underline{$\emptystring$}}} ]
							                       					     [.{\footnotesize{\underline{$\emptystring$}$\emptystring$\underline{\one}}} {\footnotesize{\underline{$\emptystring$}$\emptystring$\underline{\one\one}}} \edge[dashed]; {\footnotesize{\underline{$\emptystring$}$\emptystring$\underline{\one}$\emptystring$\underline{$\emptystring$}}} ]
							                       					     \edge[dashed]; [.{\footnotesize{\underline{$\emptystring$}$\emptystring$\underline{$\emptystring$}\zero\underline{\zero}}} {\footnotesize{\underline{$\emptystring$}$\emptystring$\underline{$\emptystring$}\zero\underline{$\emptystring$}}} ]
					    ] 
					    \edge[dashed]; {\footnotesize{\underline{$\emptystring$}$\emptystring$\underline{\zero}$\emptystring$\underline{$\emptystring$}}} ]
	\edge[dashed];  [.{\footnotesize{\underline{$\emptystring$}\zero\underline{\zero}}} {\footnotesize{\underline{$\emptystring$}\zero\underline{$\emptystring$}}} ]
	[.{\footnotesize{\underline{\one\zero}}} \edge[dashed]; {\footnotesize{\underline{\one\zero}$\emptystring$\underline{$\emptystring$}}}
					[.{\footnotesize{\underline{\one}}} \edge[dashed]; {\footnotesize{\underline{\one}$\emptystring$\underline{$\emptystring$}}}
									[.{\footnotesize{\underline{\one\one}}} \edge[dashed]; {\footnotesize{\underline{\one\one}$\emptystring$\underline{$\emptystring$}}}
							                				 {\footnotesize{\underline{\one\one\one}}}
							                				 \edge[dashed]; {\footnotesize{\underline{\one\one}\zero\underline{$\emptystring$}}} ]
									\edge[dashed]; {\footnotesize{\underline{\one}\zero\underline{$\emptystring$}}} ]
					\edge[dashed]; {\footnotesize{\underline{\one\zero}\zero\underline{$\emptystring$}}} ]
	]
\end{tikzpicture}
\caption{A tree with the labels $l(v)$ from \Cref{sec:onencascheme} and with heavy sub-labels underlined.} \label{fig:ncalabels}
\end{figure*}
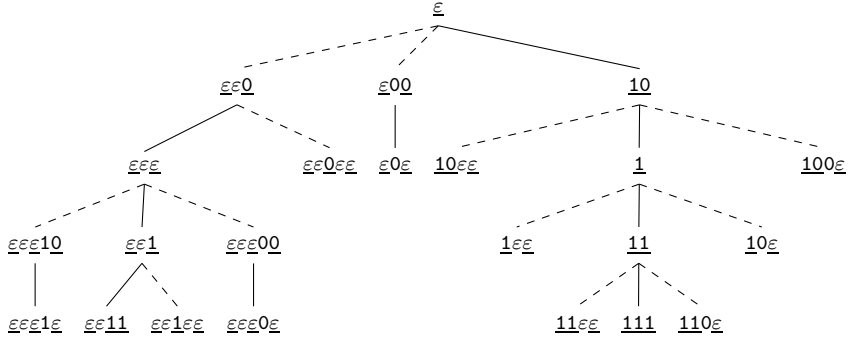

To define a labeling scheme, it remains to encode the lists $l(v)$ of binary strings into a single binary string. Before we do this, however, we note that $l(\nca(v,w))$ can be computed directly from $l(v)$ and $l(w)$. The proof is essentially the same as that in~\cite{AGKR04} although with the order $\preceq$ in place of the usual lexicographic order.

\begin{lemma} \label{lemm:ncaworks}
Let $v$ and $w$ be nodes in $T$, and let $u=\nca(v,w)$.
\begin{enumerate}[(a)]
\item \label{xprefixy} If $l(v)$ is a prefix of $l(w)$, then $l(u)=l(v)$.
\item \label{yprefixx} If $l(w)$ is a prefix of $l(v)$, then $l(u)=l(w)$.
\item \label{distinctl} If $l(v)=(h_0, l_1,\dots , h_i, l_i,\dots)$ and $l(w)=(h_0,l_1,\dots , h_i, l_i',\dots)$ with $l_i\neq l_i'$, then $l(u)=(h_0,l_1,h_1,\dots ,h_i)$.
\item \label{distincth} If $l(v)=(h_0, l_1,\dots , l_{i-1},h_i,\dots )$ and $l(w)=(h_0,l_1,h_1,\dots ,l_{i-1}, h_i',\dots)$ with $h_i\neq h_i'$, then $l(u)=(h_0,l_1,h_1,\dots  l_{i-1}, \min_{\preceq}\{h_i,h_i'\})$.
\end{enumerate}
\end{lemma}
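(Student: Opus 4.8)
The plan is to recall how the label of a node encodes the path from the root through the heavy-light decomposition, and to show that in each of the four cases the combinatorial structure forces the stated identity. The key observation is that the sequence $l(v)=(h_0,l_1,h_1,\dots,l_k,h_k)$ records, alternately, the light-label of each apex and the heavy-label of the node where we leave that heavy path. Two nodes $v$ and $w$ share the longest common prefix of these sequences precisely up to the last apex $u_i$ and heavy path they have in common; the point where the sequences first differ identifies where in the tree $\nca(v,w)$ sits, and the monotonicity property~\eqref{eq:hlabel} together with the distinctness property~\eqref{eq:llabel} of the labels lets us read off the label of the NCA without consulting $T$.

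First I would handle cases~\eqref{xprefixy} and~\eqref{yprefixx}: if $l(v)$ is a prefix of $l(w)$, I would argue that $v$ is an ancestor of $w$, hence $u=\nca(v,w)=v$ and $l(u)=l(v)$. The reasoning is that a prefix of $l(w)$ of the form $(h_0,l_1,\dots,l_j,h_j)$ corresponds to a genuine node on the root-to-$w$ path — the $h_i$ and $l_i$ are exactly the heavy/light labels along that path — and the labels $l(\cdot)$ are injective (which follows because along any heavy path the $\hlabel$'s are $\prec$-ordered hence distinct by~\eqref{eq:hlabel}, and light siblings get distinct $\llabel$'s by~\eqref{eq:llabel}); so the unique node with label equal to this prefix is an ancestor of $w$, and it must be $v$. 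Next, for case~\eqref{distinctl}, where the sequences agree through $h_i$ but then have $l_i\neq l_i'$: both $v$ and $w$ pass through the same node $v_{i-1}$ (the node whose heavy label-prefix is recorded by $h_{i-1}$... more precisely the endpoint determined by $(h_0,l_1,\dots,h_i)$ — here I need to be a little careful: $h_i$ is the heavy label of $v_i=$ the node on the $i$-th heavy path where we branch, and $l_i,l_i'$ are the light labels of the two distinct light children $u_i\ne u_i'$ of $v_i$); so $\nca(v,w)=v_i$, whose label is exactly $(h_0,l_1,h_1,\dots,h_i)$. Finally, for case~\eqref{distincth}, where the first discrepancy is $h_i\neq h_i'$ in the heavy-label slot: $v$ and $w$ share the apex $u_i$ and lie on the same heavy path, branching off at different points; by~\eqref{eq:hlabel}, the one with the smaller heavy label (in $\preceq$) has smaller depth, so it is the ancestor, and $\nca(v,w)$ is the node on that heavy path with heavy label $\min_\preceq\{h_i,h_i'\}$, whose full label is $(h_0,l_1,h_1,\dots,l_{i-1},\min_\preceq\{h_i,h_i'\})$. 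I would also note the degenerate sub-case where $h_i$ or $h_i'$ is empty, i.e. the node is the apex itself — but $\emptystring$ is the $\preceq$-minimum among heavy labels on a path (since by~\eqref{eq:hlabel} the apex, having minimum depth, has the $\prec$-least heavy label, and one checks $\emptystring$ is $\prec$-least only among strings that are "$\preceq$-comparable upward", which is fine here), so the formula still reads correctly.

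The main obstacle I expect is bookkeeping: making precise the claim that a prefix of a valid label $l(w)$ is itself the label of an actual ancestor of $w$, and conversely that the four cases are exhaustive and mutually exclusive for describing where two label-sequences first diverge (prefix vs. divergence in a light slot vs. divergence in a heavy slot). This requires spelling out the alternating structure of $l(\cdot)$ and the fact that, because each $h_i$ is $\prec$-compared within a heavy path and each $l_i$ is compared within a sibling set, the decoder can correctly localize the branch point. A secondary subtlety is the $\preceq$ order itself: I must verify that on a single heavy path the heavy labels form a $\prec$-chain (immediate from~\eqref{eq:hlabel}, since depths along the path are totally ordered), so that "$\min_\preceq$" genuinely picks out the shallower of the two branch nodes and hence the ancestor; the fact that $\preceq$ is not a prefix order but the interleaving order $s\concat\zero\concat t\prec s\prec s\concat\one\concat t'$ is exactly what makes this work, and I would point to that definition rather than re-deriving it. Once these structural facts are in place, each of (a)–(d) is a one-line deduction, so the proof is short modulo the setup.
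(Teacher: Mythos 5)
Your argument is correct, and it runs in the converse direction to the paper's proof. The paper starts from the tree: it sets $u=\nca(v,w)$, observes that $l(\parent(\apex(u)))$ is a common prefix of $l(v)$ and $l(w)$, and then does a case analysis on the tree side (whether one node is an ancestor of the other, and whether the children of $u$ leading towards $v$ and $w$ are heavy or light), checking in each situation that the labels fall into one of cases (a)--(d) and that the stated formula returns $l(u)$. You instead start from the label configuration (prefix, first divergence in a light slot, first divergence in a heavy slot) and reconstruct the tree-side situation, which is exactly the decoder's point of view and matches the literal ``if the labels look like this'' form of the statement. The price of your direction is that you need the auxiliary facts that $l$ is injective and that every prefix of $l(w)$ ending in a heavy slot is the label of an ancestor of $w$; you correctly identify that these follow from~\eqref{eq:hlabel} (heavy labels along one heavy path are $\prec$-ordered by depth, hence distinct) and~\eqref{eq:llabel} (light siblings get distinct light labels), whereas the paper's tree-first organization gets them for free. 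Both proofs ultimately rest on the same two properties, so the difference is organizational rather than conceptual; your version makes exhaustiveness and mutual exclusivity of (a)--(d) explicit, which the paper only gets implicitly from its structural case split.

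One aside in your case~(d) is wrong, though harmless: an empty heavy label does \emph{not} mean the node is the apex of its heavy path (the empty string is assigned by \Cref{lemm:halvorsen} to the node where the prefix sums of the light sizes first exceed half the total, not to the top of the path), and $\emptystring$ is \emph{not} the $\preceq$-minimum --- it sits in the middle of the order, above every string beginning with $\zero$. No special treatment of empty heavy labels is needed at all: the two divergent heavy labels belong to distinct nodes on the same heavy path, so by~\eqref{eq:hlabel} the $\preceq$-smaller one is the label of the shallower node, whether or not either string is empty. Delete that parenthetical and the proof stands.
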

\begin{proof}
By construction, $l=l(\parent(\apex(u)))$ is a prefix of both $l(v)$ and $l(w)$, and
\[
l(u)=l\concat (\llabel(\apex(u)), \hlabel(u)).
\]

Suppose first that $v$ is an ancestor of $w$, so that $u=v$, and let $x$ be the nearest ancestor of $w$ on $\hpath(v)$. Then $\apex(x)=\apex(u)$, so 
\[
l(w)=l\concat (\llabel(\apex(u)), \hlabel(x),\dots)
\]
If $u=x$, then $\hlabel(x)=\hlabel(u)$ and case~\eqref{xprefixy} applies. (If $v=w$ then case~\eqref{yprefixx} applies too.) If $u\neq x$, then $\hlabel(u)\prec\hlabel(x)$ by~\eqref{eq:hlabel} and case~\eqref{distincth} applies.
The case where $w$ is an ancestor of $v$ is analogous.

Suppose next that $v$ and $w$ are not ancestors of each other. Then $u$ must have children $\hat v$ and $\hat w$ with $\hat v\neq \hat w$  such that $\hat v$ is an ancestor of $v$ and $\hat w$ is an ancestor of $w$. At most one of $\hat v$ and $\hat w$ can be heavy. If neither of them are heavy, then they are apexes for their own heavy paths, and hence
\[
l(v)=l(u)\concat (\llabel(\hat v),\dots )
\]
and
\[
l(w)=l(u)\concat (\llabel(\hat w),\dots).
\]
By~\eqref{eq:llabel}, $\llabel(\hat v)$ and $\llabel(\hat w)$ are distinct, so case~\eqref{distinctl} applies. If $\hat v$ is heavy, then $\apex(\hat v)=\apex(u)$ and $l(v)=l\concat (\llabel(\apex(u)),\hlabel(\hat v),\dots)$ while $l(w)$ is still on the above form, i.e.\ $l(w)=l\concat (\llabel(\apex(u)),\hlabel(u),\dots)$. By~\eqref{eq:hlabel}, $\hlabel(u)\prec\hlabel(\hat v)$, so \eqref{distincth} applies. The case where $\hat w$ is heavy is analogous.
\end{proof}
Note that, as in~\cite{AGKR04}, the above theorem can be used to find labels for NCAs in constant time on the RAM as long as the labels have size $O(\log n)$.

As a final step, before presenting the encodings of the labels $l(v)$, we present a lemma that makes it easier to compute the size of the encodings. For brevity, we let $\tilde l(v) = h_0\concat l_1\concat h_1\concats l_k\concat h_k$ denote the concatenation of the sub-labels of $l(v)$.
\begin{lemma} \label{lemm:concatsize}
If $T$ has $n$ nodes, then $|\tilde l(v)|\leq \floor{\log n}$ for every node $v$ in $T$.
This holds no matter if we use \Cref{lemm:heavylabels,lemm:lightlabels} combined or any of the variants in \Cref{lemm:lightlabelsnonempty,lemm:heavylabelsnonempty} for the construction of heavy and light labels.
\end{lemma}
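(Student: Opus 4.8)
The plan is to establish the stronger bound $|\tilde l(v)|\le \log n$ and then, since $|\tilde l(v)|$ is a nonnegative integer, conclude $|\tilde l(v)|\le\floor{\log n}$.

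First I would fix a node $v$ with $\ldepth(v)=k$ and take the associated sequence $r=u_0,v_0,u_1,v_1,\dots,u_k,v_k=v$, so that $\apex(v_i)=u_i$, $\parent(u_i)=v_{i-1}$, and $\tilde l(v)=h_0\concat l_1\concat h_1\concats l_k\concat h_k$ with $h_i=\hlabel(v_i)$ and $l_i=\llabel(u_i)$. For the basic construction of \Cref{lemm:heavylabels,lemm:lightlabels} this is a clean telescoping: \eqref{eq:hlabelsize} gives $|h_i|\le\log\size(u_i)-\log\lsize(v_i)$ and \eqref{eq:llabelsize} gives $|l_i|\le\log\lsize(v_{i-1})-\log\size(u_i)$; dropping the floors and summing over all sub-labels, the terms $\log\lsize(v_i)$ and $\log\size(u_i)$ for $0<i\le k$ cancel pairwise, leaving $|\tilde l(v)|\le\log\size(u_0)-\log\lsize(v_k)=\log n-\log\lsize(v_k)\le\log n$ because $\lsize(v_k)\ge 1$.

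The real work is in the clause ``no matter if we use \ldots\ any of the variants''. In \Cref{lemm:lightlabelsnonempty,lemm:heavylabelsnonempty} certain labels are forced to be nonempty, which can lengthen them, and correspondingly \eqref{eq:hlabelsize} or \eqref{eq:llabelsize} may be replaced by the combined inequalities \eqref{eq:hllabelsize} or \eqref{eq:hlhlabelsize}; these refer to $\size(\apex(\cdot))$ of a node strictly further up the light path, so the exact cancellation above is lost. I would recover it by proving, by induction on $i$, a two-part statement about the prefix length $C_i:=|h_0|+|l_1|+\cdots+|l_i|+|h_i|$: (A) $C_{i-1}+|l_i|\le\log n-\log\size(u_i)$ for $1\le i\le k$, and (B) $C_i\le\log n-\log\lsize(v_i)$ for $0\le i\le k$. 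The key observation is that whenever a combined bound is invoked, the labels it absorbs are empty by construction of the variant, so their zero contribution to $C_i$ provides exactly the slack that compensates for the looser size: for instance, if $h_{i-1}=l_i=\emptystring$ and \eqref{eq:hlhlabelsize} governs $h_i$, then $C_i=C_{i-2}+|l_{i-1}|+|h_i|$, and one combines part (A) at $i-1$, namely $C_{i-2}+|l_{i-1}|\le\log n-\log\size(u_{i-1})$, with $|h_i|\le\log\size(u_{i-1})-\log\lsize(v_i)$ to obtain (B) at $i$; the symmetric situation where $l_i$ is replaced via \eqref{eq:hllabelsize} (when $h_{i-1}=\emptystring$) is analogous and feeds into (A). The base cases $i=1$ for (A) and $i=0$ for (B) are checked directly, using that $\apex(v_0)=r$ has no parent, so \eqref{eq:hlabelsize} is never replaced for $h_0$ and \eqref{eq:hllabelsize}/\eqref{eq:hlhlabelsize} with apex $r$ apply. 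Finally (B) at $i=k$ gives $|\tilde l(v)|=C_k\le\log n-\log\lsize(v_k)\le\log n$, hence $\le\floor{\log n}$. I expect this bookkeeping over chains of forced-empty labels to be the only delicate point; the remaining estimates are routine.
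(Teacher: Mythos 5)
Your proof is correct and follows essentially the same route as the paper: the telescoping of the bounds \eqref{eq:hlabelsize} and \eqref{eq:llabelsize} along the light path, with the variant cases handled by letting the combined bounds \eqref{eq:hllabelsize} and \eqref{eq:hlhlabelsize} absorb the forced-empty sub-labels—your invariants (A) and (B) are just an explicit bookkeeping of the paper's remark that one ``collapses sums of two or three terms'' before collapsing the whole sum. Dropping the floors and invoking integrality of $|\tilde l(v)|$ at the end is a harmless cosmetic difference.
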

\begin{proof}
Let $v$ be an arbitrary node in $T$ and recall that $l(v)=(h_0,l_1,h_1,\dots ,l_k,h_k)$ where $l_i=\llabel(u_i)$ and $h_{i}=\hlabel(v_i)$ for nodes  $u_i,v_i$, $i=0,\dots ,k$ given by $r=u_0$, $v=v_k$,  $u_{i}=\apex(v_i)$ for all $i=0,\dots ,k$ and $v_{i-1}=\parent(u_i)$ for $i=1,\dots ,k$. 
If we use \Cref{lemm:heavylabels,lemm:lightlabels} for the construction of heavy and light labels, we have by~\eqref{eq:hlabelsize} that $|h_i|\leq\floor{\log\size(u_i)-\log \lsize(v_i)}$ for all $i=0,\dots ,k$ and by~\eqref{eq:llabelsize} that $|l_i|\leq  \floor{\log \lsize(v_{i-1}) - \log \size u_i}$ for $i=1,\dots ,k$. Summarizing now gives a telescoping sum:
\begin{align*}
|\tilde l(v)| & = |h_0\concat l_1\concat h_1 \concats l_k\concat h_k| \\
& \leq  \floor{\log \size(u_{0}) - \log \lsize (v_0)} + \\
& \quad\qquad \floor{\log \lsize (v_0) - \log\size(u_1)} +  \\
& \quad\qquad \cdots +  \floor{\log \size(u_{k}) - \log \lsize (v_k)} \\
& \leq \floor{\log \size(u_0)-\log \lsize(v_k)} \\
& \leq \floor{\log n}.
\end{align*}
In the cases where we have used any of the variants in~\Cref{lemm:lightlabelsnonempty,lemm:heavylabelsnonempty}, we must use~\eqref{eq:hllabelsize} or~\eqref{eq:hlhlabelsize} first to collapse sums of two or three terms in the above sum before collapsing the whole expression. Nevertheless, the result of the computation remains unchanged.
\end{proof}

\subsection{NCA labeling schemes for different families of trees.}
Let $\trees$ and $\bintrees$ denote the families of rooted trees and binary trees, respectively.

\begin{theorem} \label{theo:ncauppertrees}
There exists an NCA labeling scheme for $\trees$
whose worst-case label size is at most $\ceil{(1+\log(2+\sqrt{2}))\floor{\log n}} \leq 2.772\log n + 1$.
\end{theorem}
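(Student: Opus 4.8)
The plan is to instantiate the generic labeling scheme of \Cref{sec:onencascheme} and bound the cost of encoding each label $l(v)=(h_0,l_1,h_1,\dots,l_k,h_k)$. First I would decide which variant of the heavy/light label construction to use. The sub-labels $h_i$ and $l_i$ are binary strings, but consecutive ones can be empty; what we do know is that a light node's apex label and the following heavy label cannot both be forced to be empty in a way that causes ambiguity. Concretely, I would apply \Cref{lemm:heavylabelsnonempty} (or, more precisely, combine it with \Cref{lemm:lightlabelsnonempty}) so that the list $l(v)$, after grouping, has the property that each pair $l_i\concat h_i$ (for $i\ge 1$) is nonempty, with $h_0$ possibly empty. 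This puts $l(v)$ into exactly the shape required by \Cref{lemm:encodenonemptypairs2}: a list $(a_0,\dots,a_{2k})$ with $a_{2j}\concat a_{2j+1}\neq\emptystring$ for $j<k$, where $a_0=h_0$, $a_{2i-1}=l_i$, $a_{2i}=h_i$.

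Next I would invoke \Cref{lemm:concatsize}, which says $|\tilde l(v)| = |h_0\concat l_1\concats l_k\concat h_k| \le \floor{\log n}$ regardless of which variant of the heavy/light construction we use, so we may set $t=|\tilde l(v)|\le\floor{\log n}$. Feeding this list into \Cref{lemm:encodenonemptypairs2} produces a single binary string of length $\ceil{(1+\log(2+\sqrt2))\,t}$ from which a decoder with no knowledge of $t$ or $k$ can recover the list $l(v)$ — and hence, by \Cref{lemm:ncaworks}, can compute $l(\nca(v,w))$ from any two such encoded labels (the decoder first decodes both lists, then applies the explicit case analysis of \Cref{lemm:ncaworks}, then re-encodes the resulting list). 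Since the bound $\ceil{(1+\log(2+\sqrt2))t}$ is monotone in $t$, the worst-case label size is at most $\ceil{(1+\log(2+\sqrt2))\floor{\log n}}$, and the numeric estimate $1+\log(2+\sqrt2)<2.772$ together with $\ceil{x}\le x+1$ gives the stated $2.772\log n+1$.

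The one genuine subtlety — what I expect to be the main obstacle — is verifying that after the modification of \Cref{lemm:heavylabelsnonempty,lemm:lightlabelsnonempty} the list $l(v)$ really does satisfy the nonemptiness hypothesis of \Cref{lemm:encodenonemptypairs2} in the correct grouping, and that \Cref{lemm:ncaworks} still applies verbatim. One has to check that whenever $h_0$ is empty the next block $l_1\concat h_1$ is nonempty, and inductively that no two consecutive blocks $l_i\concat h_i$ collapse — this is exactly what equations~\eqref{eq:llabelnonempty} and~\eqref{eq:hlabelnonempty} buy us, but threading them through the grouping $(a_{2j},a_{2j+1})=(l_{j+1},h_{j+1})$ (and treating $a_0=h_0$ as the lone possibly-empty leading string, which \Cref{lemm:encodenonemptypairs2} explicitly permits since its constraint is only on pairs $a_{2i}\concat a_{2i+1}$ for $i<k$, leaving $a_{2k}=h_k$ also unconstrained) requires care about the boundary indices. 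I would also double-check that \Cref{lemm:ncaworks} is stated for the labels $l(v)$ before encoding, so that it is unaffected by the choice of encoding, and that \Cref{lemm:concatsize} is explicitly stated to hold for the modified constructions — both of which the excerpt confirms.

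Finally I would remark, as the paper does elsewhere, that this particular scheme is the one for which no efficient (linear-time encoding, constant-time decoding) implementation is claimed, since \Cref{lemm:encodenonemptypairs2} carries no time guarantees; the alternative using \Cref{lemm:encodenonemptypairs} in place of \Cref{lemm:encodenonemptypairs2} would give the weaker $3\log n+O(1)$ bound but with an efficient implementation. The clean statement of the theorem only asserts existence of the scheme and the label-size bound, so for the proof it suffices to run the above pipeline and collect the arithmetic.
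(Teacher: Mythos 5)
Your overall pipeline is the paper's own: construct sub-labels with a modified heavy/light construction, bound $|\tilde l(v)|\le\floor{\log n}$ via \Cref{lemm:concatsize}, encode the list with \Cref{lemm:encodenonemptypairs2}, and let the decoder decode, apply \Cref{lemm:ncaworks}, and re-encode. But the step you yourself single out as the crux is handled incorrectly. Under the grouping you write down ($a_0=h_0$, $a_{2i-1}=l_i$, $a_{2i}=h_i$), the pairs that \Cref{lemm:encodenonemptypairs2} requires to be nonempty are $(a_{2j},a_{2j+1})=(h_j,l_{j+1})$ for $j<k$, i.e.\ the needed condition is $h_j\concat l_{j+1}\neq\emptystring$ (``every empty heavy label is followed by a nonempty light label''), and the lone unconstrained string is the \emph{trailing} $a_{2k}=h_k$, not the leading $h_0$. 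This is precisely guarantee~\eqref{eq:llabelnonempty} of \Cref{lemm:lightlabelsnonempty}, which is the only modification the paper's proof invokes. Your primary choice, \Cref{lemm:heavylabelsnonempty}, delivers the different guarantee~\eqref{eq:hlabelnonempty}: it \emph{explicitly permits} $h_i=l_{i+1}=\emptystring$ (merely forcing $h_{i+1}\neq\emptystring$ in that case), which directly violates the hypothesis of \Cref{lemm:encodenonemptypairs2}; and when $h_{i-1}\neq\emptystring$ it also permits $l_i=h_i=\emptystring$, so the property you actually state as your target (``each $l_i\concat h_i$ nonempty for $i\ge1$'') is not delivered either. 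That lemma is the one tailored to the consecutive-nonemptiness encoding of \Cref{lemm:encodenonemptyconsecutives} used for binary trees in \Cref{theo:ncaupperbintrees}, not to the pairwise encoding needed here.

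You are also internally inconsistent about the grouping (first $a_{2i-1}=l_i,\ a_{2i}=h_i$, later $(a_{2j},a_{2j+1})=(l_{j+1},h_{j+1})$), and the claim that $a_0$ is left unconstrained by \Cref{lemm:encodenonemptypairs2} is false: $a_0$ sits inside the constrained pair $(a_0,a_1)$, and only $a_{2k}$ is free. Falling back on ``combine with \Cref{lemm:lightlabelsnonempty}'' does not rescue the argument as written, since the paper never establishes a simultaneous variant of the two modifications and none is needed. The fix is simply to drop \Cref{lemm:heavylabelsnonempty}, use \Cref{lemm:lightlabelsnonempty} alone, and pair $(h_j,l_{j+1})$ with $h_k$ trailing; the remainder of your argument (monotonicity of the bound in $t$, \Cref{lemm:concatsize} holding for the modified construction, \Cref{lemm:ncaworks} being a statement about the unencoded lists, and the arithmetic $1+\log(2+\sqrt{2})<2.772$ with $\ceil{x}\le x+1$) then goes through exactly as in the paper.
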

\begin{proof}
The encoder uses the modified construction in \Cref{lemm:lightlabelsnonempty} to ensure that every empty heavy label is followed by a nonempty light label. This means that the sequence $l(v)=(h_0,l_1,h_1,\dots ,l_k,h_k)$ can be encoded using  $\ceil{(1+\log(2+\sqrt{2}))\floor{\log n}}$ bits; see~\Cref{lemm:encodenonemptypairs2}.
Given the encoded labels from two nodes, the decoder can now decode the labels as described in \Cref{lemm:encodenonemptypairs2}, use \Cref{lemm:ncaworks} to compute the label of the NCA, and then re-encode that label using \Cref{lemm:encodenonemptypairs2} once again.
\end{proof}
The labeling scheme in \Cref{theo:ncauppertrees} makes use of \Cref{lemm:encodenonemptypairs2} which comes without any guarantees for the time complexities for encoding and decoding. This makes the result less applicable in practice. \Cref{theo:ncauppertrees2,theo:ncaupperbintrees,theo:ncauppercaterpillars} and \Cref{coro:predefinedlabels} below all use linear time for encoding and constant time for decoding.
\begin{theorem} \label{theo:ncauppertrees2}
There exists an NCA labeling scheme for $\trees$
whose worst-case label size is at most $3\floor{\log n}$.
\end{theorem}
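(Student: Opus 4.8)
The plan is to reuse the construction from the proof of \Cref{theo:ncauppertrees} verbatim, simply replacing the space-optimal but time-oblivious encoding of \Cref{lemm:encodenonemptypairs2} by the factor-three but linear-time encoding of \Cref{lemm:encodenonemptypairs}. So the encoder first builds the heavy--light decomposition of $T$ and assigns heavy and light labels using the modified construction of \Cref{lemm:lightlabelsnonempty}, which guarantees that in each list $l(v)=(h_0,l_1,h_1,\dots ,l_k,h_k)$ every empty heavy sub-label is immediately followed by a nonempty light sub-label. Viewing this list as $(a_0,\dots ,a_{2k})$ with $a_{2i}=h_i$ and $a_{2i+1}=l_{i+1}$, and noting that the light node carrying $l_{i+1}$ is a light child of the node carrying $h_i$, property~\eqref{eq:llabelnonempty} is exactly the hypothesis $a_{2i}\concat a_{2i+1}\neq\emptystring$ for $i<k$ required by \Cref{lemm:encodenonemptypairs}. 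By \Cref{lemm:concatsize}, which applies unchanged to the modified construction, the concatenation $\tilde l(v)$ has some length $t\leq\floor{\log n}$; hence \Cref{lemm:encodenonemptypairs} turns $l(v)$ into a binary string of length $3t\leq 3\floor{\log n}$, and this string is the label of $v$.

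Next I would spell out the decoder. Given the encoded labels of two nodes $v$ and $w$, it recovers $l(v)$ and $l(w)$ via \Cref{lemm:encodenonemptypairs} (the encoded length $3t$ determines $t$), computes $l(\nca(v,w))$ by \Cref{lemm:ncaworks}, and re-encodes the result with \Cref{lemm:encodenonemptypairs}. The one point that needs checking is that $l(\nca(v,w))$ again meets the nonempty-pair hypothesis so that the re-encoding is legitimate: by \Cref{lemm:ncaworks} the list $l(\nca(v,w))$ is a prefix of $l(v)$ or of $l(w)$, possibly truncated right after a heavy sub-label and with that final heavy sub-label replaced by $\min_{\preceq}$ of two genuine heavy sub-labels of the tree. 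In every case the pairs $(h_i,l_{i+1})$ for all indices strictly below the new length are copied unchanged from $l(v)$ or $l(w)$, and the last entry of the list is unconstrained, so the hypothesis persists.

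Finally I would note the complexity bookkeeping. All labels have length $\mathrm{O}(\log n)$, so on a RAM with word size $\mathrm{O}(\log n)$ the two decodings, the constant-time NCA computation of \Cref{lemm:ncaworks}, and the single re-encoding all take constant time; the encoder runs in linear time, using the linear-time implementations of \Cref{lemm:halvorsen,lemm:halvorsennonempty} for the heavy and light labels and the linear-time encoding of \Cref{lemm:encodenonemptypairs}. I do not expect a real obstacle here: the proof is essentially a substitution into the proof of \Cref{theo:ncauppertrees}, and the only genuinely substantive steps are identifying~\eqref{eq:llabelnonempty} with the hypothesis of \Cref{lemm:encodenonemptypairs} and verifying that this hypothesis survives the re-encoding of the NCA label.
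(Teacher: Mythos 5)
Your proposal is correct and takes exactly the paper's route: the paper proves this theorem by repeating the proof of \Cref{theo:ncauppertrees} verbatim with \Cref{lemm:encodenonemptypairs} substituted for \Cref{lemm:encodenonemptypairs2}, which is precisely your plan. The extra details you supply (matching~\eqref{eq:llabelnonempty} to the nonempty-pair hypothesis, invoking \Cref{lemm:concatsize} for $t\leq\floor{\log n}$, and noting that the NCA's label—being itself a label produced by the encoder—can be re-encoded) are all sound and simply make explicit what the paper leaves implicit.
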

\begin{proof} 
The proof is identical to  that of \Cref{theo:ncauppertrees} but with \Cref{lemm:encodenonemptypairs} in place of \Cref{lemm:encodenonemptypairs2}.
\end{proof}

A variant of NCA labeling schemes~\cite{Blin2010} allows every node to also have a \emph{predefined} label and requires the labeling scheme to return the predefined label of the NCA.
\begin{corollary} \label{coro:predefinedlabels}
There exists an NCA labeling scheme for $\trees$ with predefined labels of fixed length $k$ whose worst-case label size is at most $(3+k)\floor{\log n} +1$.
\end{corollary}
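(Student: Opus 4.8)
The plan is to piggyback on the labeling scheme of Theorem \ref{theo:ncauppertrees2}, augmenting each node's label with just enough extra information to recover the predefined label of the NCA. Recall that in that scheme the label $l(v)=(h_0,l_1,h_1,\dots,l_k,h_k)$ of a node $v$ determines, via Lemma \ref{lemm:ncaworks}, the label $l(\nca(v,w))$ from $l(v)$ and $l(w)$ alone; moreover by Lemma \ref{lemm:concatsize} the concatenation $\tilde l(v)$ has length at most $\floor{\log n}$. The new encoder first runs the Theorem \ref{theo:ncauppertrees2} encoder to obtain $l(v)$ for every node, then attaches to $v$ the list $(p_0,p_1,\dots,p_k)$ of predefined labels of the $k+1$ light ancestors $u_0,\dots,u_k$ of $v$ (equivalently, the predefined labels of the nodes whose light sub-labels appear in $l(v)$, together with that of the root), interleaved with the sub-labels. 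Since every NCA $u=\nca(v,w)$ arises in Lemma \ref{lemm:ncaworks} as a node of the form $\apex$-root-path prefix shared by $v$ and $w$ — more precisely, $l(u)$ is always a prefix of $l(v)$ (or of $l(w)$) possibly with its last heavy sub-label shortened, and the \emph{light} ancestor chain of $u$ is an initial segment of that of $v$ — the predefined label of $u$ is exactly one of the $p_i$'s carried by $v$, namely $p_j$ where $j$ is the number of light sub-labels in $l(u)$ minus one. Thus the decoder: decodes both augmented labels, strips the predefined parts to get $l(v),l(w)$, applies Lemma \ref{lemm:ncaworks} to compute $l(u)$, reads off $j$ = (number of light sub-labels of $l(u)$) $-1$, and outputs $p_j$ from whichever of the two input labels is the one $l(u)$ prefixes.

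The encoding size is the main thing to control. Each predefined label has fixed length $k$, and there are exactly $k+1$ of them where $k=\ldepth(v)$, so their total length is $(k+1)\cdot k$ — but I should use the letter carefully: let $K$ denote the fixed predefined-label length (the statement calls it $k$) and $\ell=\ldepth(v)$ the light depth. The predefined labels contribute $(\ell+1)K$ bits. I want to fold these into the same structure that Lemma \ref{lemm:encodenonemptypairs} handles. The cleanest route: form the augmented sub-label list by replacing each light sub-label $l_i$ with $p_{i}\concat l_i$ (a fixed $K$-bit prefix followed by $l_i$), and leaving heavy sub-labels $h_i$ as they are, but prepending $p_0$ at the front. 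Then the concatenation of the augmented list has length $|\tilde l(v)| + (\ell+1)K \le \floor{\log n} + (\ell+1)K$, and since $\ell \le \log n$, this is at most $(1+K)\floor{\log n} + K$, roughly. Applying Lemma \ref{lemm:encodenonemptypairs} — whose bound is $3t$ for a concatenation of length $t$ with the nonemptiness pattern on consecutive pairs — would give about $3(1+K)\log n$, which is worse than the claimed $(3+K)\floor{\log n}+1$. So instead I apply the encoding only to the \emph{original} $\tilde l(v)$ part as in Theorem \ref{theo:ncauppertrees2} (cost $\le 3\floor{\log n}$), and separately store the predefined block $p_0p_1\cdots p_\ell$ of length $(\ell+1)K \le K\floor{\log n}+K$ together with enough markers to know where the boundaries are. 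Since each $p_i$ has the \emph{same} fixed length $K$, no boundary markers are needed inside the block — it suffices to know $\ell$, which is recoverable from the decoded $l(v)$ (it is the number of light sub-labels minus one). The only subtlety is telling the decoder where the Lemma \ref{lemm:encodenonemptypairs} encoding ends and the predefined block begins; but the Lemma \ref{lemm:encodenonemptypairs} encoding has length exactly $3t$ where $t$ is self-delimiting from that length, so once the decoder knows the total label length it can... actually it cannot directly, so I prepend $\ceil{\log n}$-ish bits, or simpler: I note $3t$ determines $t$, and the remaining bits are the predefined block, whose length $(\ell+1)K$ is then determined since $\ell$ comes from $l(v)$. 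Total: $3\floor{\log n} + K\floor{\log n} + K \le (3+K)\floor{\log n}+1$ after absorbing the $+K$... here I will need the inequality $(\ell+1)K \le K\floor{\log n}$, i.e. $\ell+1\le\floor{\log n}$, using $\ldepth(T)\le\log n$ and $\ell+1\le\log n$ rounds down appropriately for $n\ge 2$; the residual constant is absorbed into the $+1$.

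\textbf{Main obstacle.} The delicate point is the bookkeeping showing that the decoder can \emph{separate} the two concatenated pieces (the Lemma \ref{lemm:encodenonemptypairs} encoding of $\tilde l(v)$ and the fixed-width predefined block) from the label's length alone, and then verify that $\ell$ — hence the block length and the location of $p_j$ — is correctly recovered from the decoded $l(v)$. I expect this to go through because $3t$ is a strictly increasing function of $t$ so it is invertible, but one must double-check that after splitting off a prefix of the right form the leftover length $(\ell+1)K$ is consistent with the $\ell$ read from $l(v)$; if there is any ambiguity (e.g. $t$ could be smaller with trailing predefined bits misread as encoding bits), I would instead store $K$-many explicit separators or, most robustly, simply encode the pair $(\text{Lemma \ref{lemm:encodenonemptypairs} string},\ p_0\cdots p_\ell)$ using Lemma \ref{lemm:encodepairs}, paying an extra $\ceil{\log(\text{total})} = O(\log\log n)$ bits, which is still absorbed into the $+1$ only if we are careful — more honestly it would push the bound to $(3+K)\floor{\log n}+O(\log\log n)$, so I will aim for the clean length-is-self-delimiting argument and fall back to Lemma \ref{lemm:encodepairs} if the constant matters less than correctness. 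The correctness that $p_{\nca}$ is among $v$'s stored predefined labels follows directly from the structural description in Lemma \ref{lemm:ncaworks}: in every case $l(u)$ shares its entire light-ancestor chain with one of $v,w$, so that node carries $p_{\nca(v,w)}$.
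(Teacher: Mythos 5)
Your overall architecture---reuse the $3\floor{\log n}$ scheme of \Cref{theo:ncauppertrees2} and append a fixed-width table of predefined labels, letting lengths delimit the pieces---is the same as the paper's, and the delimiting worry you raise is solved there more simply than your fallback: the paper prepends a string $\zero^i\one$ so that the NCA-label part has length \emph{exactly} $3\floor{\log n}+1$, appends the table, and pads with $\zero$s to total length exactly $(3+k)\floor{\log n}+1$, so the decoder splits purely by length and no \Cref{lemm:encodepairs}-style $O(\log\log n)$ overhead arises.

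The genuine gap is \emph{which} predefined labels you store. You store those of the light ancestors $u_0,\dots,u_\ell$ (the apexes), and your decoder outputs the entry indexed by the number of light sub-labels of $l(u)$, i.e.\ the predefined label of $\apex(u)$. But $\nca(v,w)$ is in general not an apex: in cases (c) and (d) of \Cref{lemm:ncaworks} it is either one of $v,w$ themselves or the \emph{parent} of a light child $\hat v$ or $\hat w$, a node that typically sits in the middle of a heavy path, and its predefined label is not determined by its apex. Concretely, let $T$ be a path rooted at an end: every non-root node is heavy, so each node stores only the root's predefined label $p_0$, yet the NCA of two distinct nodes is the shallower one, whose predefined label then appears in neither input label. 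The correct table (and the paper's) consists of the predefined labels of the nodes carrying the \emph{heavy} sub-labels of $l(v)$, i.e.\ the parents of the light nodes on the root-to-$v$ path---and, to cover the case where the NCA equals a query node reached through a heavy child (a point the paper itself glosses over), $v$'s own predefined label as well. With that table the decoder's index is indeed the number of light sub-labels of $l(u)$, taken from whichever input label the sub-label sequence $l(u)$ extends exactly, and since $\ldepth(v)\le\floor{\log n}$ the size bound goes through; your own arithmetic ``absorbing the $+K$ into $+1$'' does not quite work as written, but it becomes moot once the table is indexed by the right nodes.
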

\begin{proof}
It suffices to save together with the NCA label of a node $v$ a table of the predefined labels for the at most $\floor{\log n}$ parents of light nodes on the path from the root to $v$, since the NCA of two nodes will always be a such for one of the nodes. By prepending a string in the form $\zero^i\one$ to the NCA label of $v$ we can ensure that it has size \emph{exacly} $3\floor{\log n}+1$. We can then append a table of up to $\floor{\log n}$  predefined labels of size $k$. Finally, we append $\zero$s to make the label have size exactly $(3+k)\floor{\log n} +1$. The decoder can now use the label's length to split up the label into the NCA label and the entries in the table of predefined labels.
\end{proof}

\begin{theorem} \label{theo:ncaupperbintrees}
There exists an NCA labeling scheme for $\bintrees$
whose worst-case label size is at most $\ceil{(1+\log 3)(\floor{\log n}-1)}+3 \leq  2.585\log n + 2$.
\end{theorem}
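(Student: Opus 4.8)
The plan is to reuse the generic labeling scheme of \Cref{sec:onencascheme} and exploit the special structure of binary trees to drive all light labels to length zero, and then encode the resulting pure list of heavy labels with \Cref{lemm:encodenonemptyconsecutives}. First I would observe that in a binary tree every internal node has at most one light child, so each set $L$ of light siblings used in \Cref{lemm:lightlabels} is a singleton; feeding a one-element weight sequence into \Cref{lemm:halvorsen} returns the empty string, hence $\llabel(v)=\emptystring$ for every light node $v$. Consequently, for every node $v$ the label $l(v)=(h_0,l_1,h_1,\dots,l_k,h_k)$ has $l_i=\emptystring$ for all $i$, and is therefore carried entirely by the list of heavy labels $(h_0,h_1,\dots,h_k)$, with $\tilde l(v)=h_0\concats h_k$.

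Next I would invoke the modified heavy-label construction of \Cref{lemm:heavylabelsnonempty}. With the notation of \Cref{sec:onencascheme} (so $h_i=\hlabel(v_i)$, $u_i=\apex(v_i)$ is a light child of $v_{i-1}=\parent(u_i)$, and $v_i$ lies on the heavy path with apex $u_i$), applying \Cref{lemm:heavylabelsnonempty} with $u=v_{i-1}$, $v=u_i$, $w=v_i$ and using $\llabel(u_i)=\emptystring$ gives $h_{i-1}=\emptystring \implies h_i\neq\emptystring$, i.e.\ $h_{i-1}\concat h_i\neq\emptystring$ for all $1\le i\le k$. Thus the list $(h_0,\dots,h_k)$ satisfies the hypotheses of \Cref{lemm:encodenonemptyconsecutives}, and by \Cref{lemm:concatsize} (which explicitly covers this modified construction) its concatenation has length $t=|\tilde l(v)|\le\floor{\log n}$. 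Encoding $(h_0,\dots,h_k)$ with \Cref{lemm:encodenonemptyconsecutives} therefore produces a string of length $\ceil{(1+\log 3)(t-1)}+3$; since this quantity is nondecreasing in $t$, it is at most $\ceil{(1+\log 3)(\floor{\log n}-1)}+3$, and a direct estimate using $1+\log 3<2.585$ bounds this by $2.585\log n+2$, as claimed.

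Finally I would spell out encoder and decoder exactly as in the proofs of \Cref{theo:ncauppertrees,theo:ncauppertrees2}: the encoder outputs the above encoding of $(h_0,\dots,h_k)$; the decoder, given two such strings, applies the decoder of \Cref{lemm:encodenonemptyconsecutives} to recover the two lists of heavy labels, reinserts the empty light labels to reconstruct $l(v)$ and $l(w)$, computes $l(\nca(v,w))$ by \Cref{lemm:ncaworks} (which relies only on~\eqref{eq:hlabel} and~\eqref{eq:llabel}, both preserved by the modification), and re-encodes via \Cref{lemm:encodenonemptyconsecutives}. The only genuinely delicate point is verifying that forcing the light labels to be empty is compatible with the modification of \Cref{lemm:heavylabelsnonempty} and that the ``no two consecutive empty heavy labels'' property really does follow from it in the binary case; a secondary nuisance is the boundary behaviour of $\ceil{(1+\log 3)(t-1)}+3$ for very small $t$, which is dispatched by the monotonicity remark above. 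Everything else is routine bookkeeping.
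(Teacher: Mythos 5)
Your proposal is correct and follows essentially the same route as the paper's proof: empty light labels because binary trees have at most one light child per node, the modified construction of \Cref{lemm:heavylabelsnonempty} to forbid consecutive empty heavy labels, \Cref{lemm:concatsize} for the bound $t\le\floor{\log n}$, and \Cref{lemm:encodenonemptyconsecutives} for the encoding, with decoding via \Cref{lemm:ncaworks}. Your write-up is simply a more explicit version of the paper's terse argument, including the instantiation $u=v_{i-1}$, $v=u_i$, $w=v_i$ that the paper leaves implicit.
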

\begin{proof}
First note that every node in a binary tree has at most one light child. 
We can therefore assume that all light labels are empty.
Letting the encoder use the construction in \Cref{lemm:heavylabelsnonempty}, we can then ensure that every empty heavy label is followed by (an empty light label and) a nonempty heavy label. Since we can ignore light labels, it suffices to encode the sequence $(h_0,h_1,\dots ,h_k)$, and this sequence can be encoded with $\ceil{(1+\log 3)(\floor{\log n}-1)}+3$ bits; see \Cref{lemm:encodenonemptyconsecutives}. The rest of the proof follows the same argument as the proof of \Cref{theo:ncauppertrees}.
\end{proof}

A \emph{caterpillar} is a tree in which all leaves are connected to a single \emph{main path}. We assume caterpillars to always be rooted at one of the end nodes of the main path. Let $\caterpillars$ denote the family of caterpillars.
\begin{theorem} \label{theo:ncauppercaterpillars}
There exists an NCA labeling scheme for $\caterpillars$
whose worst-case label size is at most $\floor{\log n}+\ceil{\log\floor{\log n}}+1$.
\end{theorem}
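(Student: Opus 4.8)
Our plan is to instantiate the generic heavy--light labeling scheme of \Cref{sec:onencascheme} on a caterpillar, observe that its labels are so rigid that a hand--tailored encoding beats the generic ones, and then bound the encoding size. Concretely, write the main path as $p_0=r,p_1,\dots ,p_m$, and take $p_{i+1}$ to be the heavy child of $p_i$ for each $i<m$ — a legal choice, since the subtree at $p_{i+1}$ has size at least $1$, i.e.\ at least that of any leaf child of $p_i$. Then the only heavy path with more than one node is $\{p_0,\dots ,p_m\}$ (possibly extended by one leaf child of $p_m$, if $p_m$ has any), and every other leaf is a singleton heavy path and a light child of some $p_i$. Hence $\ldepth(p_j)=0$ and $\ldepth(\ell)=1$ for every such leaf $\ell$, so the labels of \Cref{sec:onencascheme} have just two shapes: $l(v)=(\hlabel(v))$ for a node $v$ on the main heavy path, and $l(\ell)=(\hlabel(p_i),\llabel(\ell),\hlabel(\ell))=(\hlabel(p_i),\llabel(\ell),\emptystring)$ for a light leaf $\ell$ with parent $p_i$ (the trailing heavy label is empty because $\ell$ is alone on its heavy path). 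Thus $l(v)$ is recovered from one ``is $v$ a light leaf?'' bit together with the pair $(\hlabel(p_i),\llabel(\ell))$, whose second component is empty precisely for the first kind of node.

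Next, when constructing the light labels of the light children of each $p_i$ (\Cref{lemm:lightlabels}), we use \Cref{lemm:halvorsennonempty} rather than \Cref{lemm:halvorsen}, so that \emph{every} light leaf gets a nonempty light label. Here all input weights equal $1$ and, together with the one duplicated weight, sum to $\lsize(p_i)$, so \Cref{lemm:halvorsennonempty} still yields $|\llabel(\ell)|\le\floor{\log\lsize(p_i)}$; hence~\eqref{eq:llabelsize} is preserved and \Cref{lemm:concatsize} applies verbatim. Writing $\lambda:=\floor{\log n}$, this gives $|\hlabel(p_i)\concat\llabel(\ell)|\le\lambda$ and $|\hlabel(v)|\le\lambda$, and the only node with an empty heavy label is a single main-heavy-path node.

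Now the encoding. For a main-heavy-path node $v$ put $\enc(v)=\zero\concat\hlabel(v)$, of length $1+|\hlabel(v)|\le 1+\lambda$. For a light leaf $\ell$ with parent $p_i$, let $t:=|\hlabel(p_i)\concat\llabel(\ell)|\ge 1$ and put $\enc(\ell)=\one\concat\sigma\concat\hlabel(p_i)\concat\llabel(\ell)$, where $\sigma$ is the length-$\ceil{\log t}$ string encoding $|\hlabel(p_i)|$ via \Cref{lemm:labelssamesize}; crucially $|\hlabel(p_i)|$ takes one of only the $t$ values $0,\dots ,t-1$ because $\llabel(\ell)\ne\emptystring$, so $\ceil{\log t}$ bits suffice and $|\enc(\ell)|=1+\ceil{\log t}+t\le 1+\ceil{\log\lambda}+\lambda$. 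The decoder reads the leading bit first; within each of the two types the total length is a strictly increasing function of $t$, so it recovers $t$ from the length, peels off $\sigma$, $\hlabel(p_i)$ and $\llabel(\ell)$ (or just $\hlabel(v)$), and reconstructs the list $l(v)$. It then applies \Cref{lemm:ncaworks} to $l(v)$ and $l(w)$ to obtain $l(\nca(v,w))$, which — being the label of an actual node of the caterpillar — again has one of the two shapes above and satisfies $|\hlabel\concat\llabel|\le\lambda$ by \Cref{lemm:concatsize}, so it is re-encoded by the same rule. The worst-case label size is therefore at most $1+\ceil{\log\lambda}+\lambda=\floor{\log n}+\ceil{\log\floor{\log n}}+1$.

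The heavy--light bookkeeping and the check that~\eqref{eq:llabelsize} survives the switch to \Cref{lemm:halvorsennonempty} are routine. The single point that really matters for hitting the stated bound — rather than $\floor{\log n}+\ceil{\log\floor{\log n}}+2$ — is insisting that light labels be nonempty: this drops the number of admissible split positions inside $\hlabel(p_i)\concat\llabel(\ell)$ from $t+1$ to $t$, so the overhead is $\ceil{\log t}$ and not $\ceil{\log(t+1)}$ bits; the only attendant subtlety is that the leading type bit must be read before inferring $t$ from the length, since the length-to-$t$ correspondences for the two node types overlap.
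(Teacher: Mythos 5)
Your proposal is correct and follows essentially the same route as the paper: restricted to caterpillars the heavy-light labels take exactly the two shapes $(h_0)$ and $(h_0,l_1,\emptystring)$, and you encode them with a type bit plus a header recording $|h_0|$ in front of the concatenation, which is precisely the paper's encoding via \Cref{lemm:encodepairs}. The only (minor) deviation is that you force light labels to be nonempty via \Cref{lemm:halvorsennonempty} so that the header provably needs only $\ceil{\log t}$ bits, i.e.\ you inline a slightly tightened version of \Cref{lemm:encodepairs} (whose stated $\ceil{\log t}$ bits for the $t+1$ possible values of $|a_1|$ is a hair loose when $t$ is a power of two) instead of citing it, after checking that~\eqref{eq:llabelsize} and hence \Cref{lemm:concatsize} survive the modification.
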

\begin{proof}
By definition of caterpillars, every label $l(v)$ is either in the form $(h_0)$ or $(h_0,l_1,\emptystring)$. We encode the first case as $\zero\concat h_0$ and the second case as $\one\concat x$, where $x$ is the encoding of the pair $(h_0,l_1)$ using $\floor{\log n}+\ceil{\log\floor{\log n}}$ bits; see \Cref{lemm:encodepairs}. 
In both cases, the label size is at most $\floor{\log n}+\ceil{\log\floor{\log n}}+1$, and
the decoder can easily distinguish the two cases from the first bit. The rest of the proof follows the same argument as the proof of \Cref{theo:ncauppertrees}.
%
\end{proof}
For comparison, the best known lower bound for NCA labeling schemes for caterpillars is the trivial $\floor{\log n}$.

\bibliography{literature}
\bibliographystyle{amsplain} 
\end{document}